\documentclass [11pt]{article}
\usepackage{amsfonts}
\usepackage{amsmath,amssymb,amsthm}
\usepackage{graphicx}
\usepackage{bbm}
\usepackage[all]{xy}
\usepackage{rotating}
\usepackage{multirow}
\usepackage{enumerate}
\usepackage[dvipsnames]{xcolor}
\usepackage{tikz}
\usetikzlibrary{arrows,intersections,arrows.meta,positioning,fadings,calc}
\usepackage{todonotes}
\usepackage[a4paper,textwidth=16cm,textheight=24cm]{geometry}
\usepackage[authoryear,round]{natbib}
\usepackage{authblk}

\usepackage[colorlinks,linkcolor=RoyalBlue,citecolor=RoyalBlue,urlcolor=RoyalBlue]{hyperref}

\newtheorem{theorem}{Theorem}
\newtheorem{remark}[theorem]{Remark}

\newtheorem{proposition}[theorem]{Proposition}
\newtheorem{corollary}[theorem]{Corollary}
\newtheorem{definition}[theorem]{Definition}
\newtheorem{example}[theorem]{Example}

\newcommand{\R}{\mathbb{R}}
\newcommand{\F}{\mathbb{F}}
\newcommand{\xbar}[1]{\overline{X}_{#1}}
\newcommand{\Fbar}{\overline{F}}

\newcommand{\BB}{\mathbb{B}_F}
\newcommand{\conv}[2]{#1^{\ast #2}}
\newcommand{\iid}{{\rm i.i.d.}}
\newcommand{\thetabar}{\overline{\theta}}
\newcommand{\etabar}{\overline{\eta}}

\newcommand{\InvSub}{{\rm InvSub}}

\begin{document}

\title{Nonparametric tests for stochastic dominance between linear combinations of risks}

\author[1]{Tommaso Lando\thanks{tommaso.lando@unibg.it, ORCID: 0000-0003-4288-0264}}
\author[2]{Paulo Eduardo Oliveira\thanks{paulo@mat.uc.pt, ORCID: 0000-0001-727-5705}}
\affil[1]{Department of Economics, University of Bergamo, Italy}

\affil[2]{CMUC, Department of Mathematics, University of Coimbra, Portugal}

\date{}

\maketitle

\begin{abstract}
Weighted aggregations of \iid\ risks without a finite mean can behave in a strikingly different way from the finite-mean case: as the weight vector becomes more balanced, the resulting combination may become stochastically larger, rather than less dispersed. Existing results establish stochastic dominance between pairs of linear combinations—or between a convex combination and the underlying variable—under shape restrictions on the distribution and structural assumptions on the weights. Nonetheless, two practical limitations remain: (i) the sufficient conditions vary across results, and (ii) being non-necessary, they exclude many relevant configurations. Moreover, under a statistical perspective, where the true distribution of the data is assumed to be unknown, these conditions cannot be checked. Motivated by this gap, we develop nonparametric procedures to test whether two linear combinations are stochastically ordered. We propose two complementary approaches: a least-favorable calibration and a bootstrap-based method. We establish their asymptotic validity under the null of stochastic dominance and their consistency against the alternative of non-dominance. Monte Carlo experiments illustrate the finite-sample performance of the proposed procedures across a range of models and weight configurations. An application to insurance claim-severity data illustrates how the tests can be used to assess whether pooling leads to a stochastically larger loss.
\end{abstract}

\noindent\textbf{Keywords:} stochastic order; risk aggregation; sample mean; heavy-tails; nonparametric test; bootstrap.



\section{Introduction}\label{intro}
Risk aggregation is a central problem in insurance, finance, and risk theory. 
When several losses or risk exposures are combined through a vector of weights, the 
usual intuition is that spreading the exposure more evenly should reduce risk. This 
diversification principle is consistent with classical results in stochastic ordering 
theory. It is well-known \citep[Theorem 3.A.35]{shaked} that convex combinations of an independent and identically distributed (\iid) sample become less dispersed, in terms of the \textit{convex order}, as the weight vector becomes more balanced, in terms of \textit{majorization}. This implies, for instance, that the sample mean of 
size \(k\) is less dispersed than the sample mean of size \(k-1\) (see 
Example~3.A.29 and Corollary~3.A.31 in \cite{shaked}). The existence and finiteness of the mathematical expectation of the underlying random variable is crucial for these characterisations.

However, as discussed in some recent literature, when the mean is not finite, the behaviour is less predictable, and often unexpected. Convex-order comparisons can be extended in certain ways to random variables without finite means \citep{CoteWang2026}, but their interpretation changes substantially in this setting. 
Here we focus on a different phenomenon, formulated in terms of the usual stochastic order.
In particular, \cite{superPareto2024} proved that, for a specific family of distributions obtained by convex transformations of Pareto random variables, named \textit{super-Pareto}, any convex combination of the sample stochastically dominates the generating random variable.
This result has many important applications, including the surprising fact that the sample mean does not always shrink towards a constant when the sample size increases, but, on the contrary, it can get stochastically larger. More recently, a number of contributions enlarged the family of distributions for which such a result holds \citep{ChenShneer2025,ArabConvex2025,Muller2024,Vincent2025}.  
In particular, \cite{Muller2024} considered the class of variables that are convex transformations of the Cauchy distribution, and \cite{ArabConvex2025} proved that the result holds for {the class of }random variables whose inverted distribution {is subadditive}, called \InvSub. These two latter classes seem to be largest known families of distributions for which convex combinations of a sample stochastically dominate the underlying variable, and there is no inclusion between them. 

Taken together, these results naturally raise the question of how stochastic dominance behaves when comparing different linear combinations of a sample, possibly involving distinct weight vectors and sample sizes. A first answer to this question was recently given by \cite{Chen2025Combinations}, who proved that the stochastic dominance holds when the weight vectors satisfy a majorization relation, provided that the inverted distribution of the underlying random variable is concave. This class is strictly contained in the \InvSub\ class of \cite{ArabConvex2025}. The result of \cite{Chen2025Combinations} implies many important results, including --within their class of distributions-- stochastic monotonicity of the sample mean in the sample size. {The range of applications discussed in \cite{superPareto2024}, \cite{ChenShneer2025}, \cite{Muller2024}, \cite{Vincent2025}—spanning economics, finance, insurance, and risk theory—suggests that this emerging literature is likely to have methodological and applied relevance.}

Overall, the recent literature proposed different classes, not always included among them, yielding stochastic dominance properties at different levels of generality. In fact, all the aforementioned results only provide sufficient conditions for the dominance property. It is not difficult to find examples of distributions that are not included in any of these classes but for which, at least for specific choices of the combination coefficients, the dominance result still holds.
Moreover, in empirical risk analysis, one typically observes data from the underlying distribution rather than the distribution itself. Therefore, it is crucial to infer from the data whether a dominance relation of this type is supported empirically.
This could be done with two different approaches, either testing whether the underlying distribution belongs to one of the aforementioned classes, or testing the property directly. 
With regard to the first approach, testing whether a distribution belongs to the super-Pareto, or to the super-Cauchy class, could be relatively straightforward using existing methods, such as the one recently proposed by \cite{landobenjrada}. Testing the condition proposed by \cite{Chen2025Combinations}, which is based on concavity, would be also quite simple.
{On the other hand, developing tests for the subadditivity-based conditions considered in \cite{ChenShneer2025} and \cite{ArabConvex2025} appears to be an open and potentially challenging problem. Overall, the first approach offers no unique prescription: the choice of which condition to test is not obvious and may strongly affect the resulting inference}. Moreover, testing these sufficient conditions would still leave out many interesting cases, as discussed earlier. 
Therefore, the second approach seems to be more direct and effective. One can directly develop tests that detect the dominance between a linear combination of a sample and the underlying variable, or, more generally, between two linear combinations of interest. 

The objective of this paper is to introduce a statistical test to detect dominance between pairs of linear combinations, in absence of any distributional assumption. In Section~\ref{sec:preliminaries}, we review some relevant results and compare the classes of distributions for which these results hold.
Since relying on classes of distributions is not always possible, in Section~\ref{sec:dominance-without} we focus on those properties that can be obtained without any distributional assumptions. Namely, we consider a dominance between a pair of linear combinations as the starting point for obtaining infinitely many other dominance relations, just by relying on the mathematical properties of stochastic dominance. 
Our results appear particularly simple when the linear combinations reduce to the sample mean. In Section~\ref{sec:estimator} we study inference for the distribution of a linear combination {based on iterations of a weighted convolution operator, applied to the empirical distribution}. We establish the uniform almost sure convergence of {such an} estimator and the weak convergence of the corresponding empirical process. 
As simpler special cases, we consider the case when the underlying distribution is a Cauchy, as any convex combination of \iid\ standard Cauchy random variables is again standard Cauchy; and the case when the combination of interest is a sample mean. In Section~\ref{sec:testing}, we propose two versions of a test for the general null hypothesis of dominance between a pair of linear combinations, versus the alternative of non-dominance, with two alternative methods to obtain critical values. 
The first approach is based on the already mentioned closure property of the Cauchy distribution, which can be seen as the least favourable case under the null. The second approach relies on a bootstrap procedure. For both tests, we establish asymptotic validity under the null hypothesis and consistency under the alternative. Unlike standard dominance testing \citep{whang2019}, our problem is one-sample: both distributions being compared are transformations of the same unknown law, leading to a nonstandard Gaussian limit with covariance expressed through a weighted convolution operator. In Section~\ref{sec:simulations}, we examine and compare the two testing approaches by simulating under different scenarios. Finally, in Section~\ref{sec:application} we consider an application to insurance claim-severity data. The analysis shows how the proposed tests can be used in a real-data setting to study whether pooling produces averaged losses that are stochastically larger than individual losses. All proofs are presented in the Appendix.

{\section{Notation and preliminaries}\label{sec:preliminaries}
We now introduce general concepts and terminology to be used throughout. 
Let $X$ be a random variable with cumulative distribution function (CDF) $F$ and survival function $\Fbar=1-F$. The convolution product is denoted as $*$. As usual, the sample mean of an \iid\ sample $X_1,\ldots,X_s$ is denoted as $\xbar{s}$. Recall that, given a nondecreasing and nonnegative function $g$ such that $g(0)=0$, we say that $g$ is anti-starshaped (at the origin) if, for every $x\geq0$ and every $\lambda\in[0,1]$, $g(\lambda x) \geq \lambda\, g(x)$, and that $g$ is subadditive if,  for every $x,y\geq 0$, $g(x+y)\leq g(x) + g(y)$. 
It is well known that concavity implies anti-starshapedness, which, in turn, implies subadditivity. Note that the terms ``increasing'' and ``decreasing'' are taken as ``non-decreasing'' and ``non-increasing''. Still, with respect to notations, we denote with $\to_p$, $\to_{a.s.}$, and $\to_d$ convergence in probability, almost surely, and in distribution of sequences of random variables, respectively. When referring to sequences of stochastic processes, weak convergence is denoted by $\rightsquigarrow$, as in \cite{vw}. Finally, we denote by $\ell^\infty(\R)$ the space of all bounded real-valued functions on $\R$, equipped with the supremum norm{, denoted as usual by $\|\cdot\|_\infty$}.}

{Our main focus is on stochastic dominance. In particular, we study the conditions under which linear combinations satisfy this relation, as well as how to test whether stochastic dominance holds. Since this concept is central to our analysis, we recall its definition.}
\begin{definition}[\cite{shaked}, p. 3]
\label{def:order1}
Given two random variables $X$ and $Y$, we say that $Y$ stochastically dominates $X$, denoted as $Y\geq_{st} X$, if $P(X\leq x)\geq P(Y\leq x)$ for every $x\in\mathbb{R}$.
\end{definition}
Now, to be more specific, given $X_1,X_2,\ldots$ independent and identically distributed with $X$, and vectors of nonnegative weights $\thetabar=(\theta_1,\ldots,\theta_s)$ and $\etabar=(\eta_1,\ldots,\eta_t)$, it can happen that
\begin{equation}\label{SD-cx}
	\theta_1X_1+\cdots+\theta_sX_s\ \geq_{st}\ \eta_1X_1+\cdots+\eta_tX_t.
\end{equation}
\cite{Chen2025Combinations} recently characterised sufficient conditions on the
underlying distribution and on the weight coefficients that imply this result.
Intuitively, a linear combination in which the weight vector is more concentrated in a single component is expected to be closer, in distribution, to the underlying variable, $X$. 
On the other hand, if the weight vector is more balanced, the linear combination may behave differently, for example, shrinking towards a constant, when the mean is finite, or becoming stochastically larger, when the mean is not finite, as discussed above. 
The translation of this intuitive behaviour of a vector's components into a mathematical formulation is possible through the notion of majorization \citep{MarshallOlkinMajor}.

{As a particularly relevant special case of (\ref{SD-cx}), taking $\etabar=(1,0,...,0)$, the dominance 
\begin{equation}
\label{SD-2}
\theta_1X_1+\cdots+\theta_sX_s\geq_{st}X, 
\end{equation}
for every integer $n\geq 2$ and every choice of nonnegative real numbers $\theta_1,\ldots,\theta_n$ that sum up to 1, has been studied in \cite{superPareto2024}, \cite{Muller2024}, \cite{ChenShneer2025} and \cite{ArabConvex2025}. These authors identify different classes of distributions for which the dominance (\ref{SD-2}) holds.
Among these classes, the largest ones are those proposed by \cite{Muller2024} and \cite{ArabConvex2025}. There is no inclusion between these two classes, in fact, the former one includes distributions on the entire real line, whereas the latter allows for more flexibility in terms of shape, such as discontinuities. In particular, the class of \cite{ArabConvex2025} is characterised by the subadditivity of the inverted distribution, {i.e., the} CDF $H(x)=1-F(\tfrac1x)$.} 



{The stochastic dominance (\ref{SD-cx}) needs some control on the weight coefficients, as mentioned above. We recall the relevant majorization notion between vectors in $\R^s$.}
\begin{definition}(\cite{MarshallOlkinMajor}, Definition~1.A.1)
	\label{def:maj}
	Let $\thetabar=(\theta_1,\ldots,\theta_s),\,\etabar=(\eta_1,\ldots,\eta_s)\in\R^s$.
	We say that $\thetabar$ is dominated by $\etabar$ in majorization order, denoted $\thetabar\prec\etabar$ if
	$\sum_{i=1}^k\theta_{i:s}\geq\sum_{i=1}^k\eta_{i:s},\;k=1,\ldots,s-1$, and $\sum_{i=1}^s\theta_i=\sum_{i=1}^s\eta_i$,
	where $\theta_{i:s}$ and $\eta_{i:s}$ represent the coordinates of $\thetabar$ and $\etabar$, respectively, ordered increasingly.
\end{definition}
The relation $\thetabar\prec\etabar$ generally means that the components of $\thetabar$ are more spread out compared to those of $\etabar$.
%

%
\cite{Chen2025Combinations} proved a simple sufficient condition for (\ref{SD-cx}) to hold.
\begin{theorem}[\cite{Chen2025Combinations}, Theorem~1]\label{theorem chen}
Let $X$ be a nonnegative random variable with CDF $F$ such that $H(x)=1-F(\tfrac1x)$ is concave, and $X_1,\ldots,X_s$ \iid\ with $X$. If $\thetabar,\,\etabar\in[0,+\infty)^s$ are such that $\thetabar\prec\etabar$, then $\sum_{i=1}^s\theta_i X_i\geq_{st}\sum_{i=1}^s\eta_i X_i$.
\end{theorem}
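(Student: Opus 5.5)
Since $\thetabar\prec\etabar$, the standard characterisation of majorization (Marshall–Olkin) writes $\thetabar$ as a finite product of T-transforms applied to $\etabar$. Each T-transform modifies only two coordinates, say $(\eta_i,\eta_j)\mapsto(\lambda\eta_i+(1-\lambda)\eta_j,\,(1-\lambda)\eta_i+\lambda\eta_j)$. Moreover, $\geq_{st}$ is transitive and is preserved under adding an independent summand: condition on the remaining $X_k$ and integrate the CDF inequality. The theorem therefore reduces to the two-variable claim
\[
aX_1+bX_2\ \geq_{st}\ cX_1+dX_2 \qquad\text{whenever } (a,b)\prec(c,d),
\]
with $a,b,c,d\ge0$, $a+b=c+d$. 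By scaling and symmetry it suffices to show that $\phi(\lambda)=P(\lambda X_1+(1-\lambda)X_2\le x)$ is nonincreasing in $\lambda\in[0,1/2]$ for every $x>0$. Equivalently, $\phi$ is maximal at the endpoints and decreases toward the balanced point.

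To exploit concavity of $H$, I would pass to the reciprocal variables $Y_i=1/X_i$, whose CDF is $P(Y\le y)=P(X\ge 1/y)=H(y)$, up to atoms. This is a concave CDF on $[0,\infty)$, i.e.\ $Y$ has a nonincreasing density on $(0,\infty)$ plus a possible atom at $0$ (corresponding to $X=\infty$ mass, which I would allow or exclude). Then, with $x=1/y$,
\[
\{\lambda X_1+(1-\lambda)X_2\le x\}=\Big\{\tfrac{\lambda}{Y_1}+\tfrac{1-\lambda}{Y_2}\le \tfrac1y\Big\}.
\]
The event is a region in the $(Y_1,Y_2)$ quadrant bounded by a hyperbola-type curve. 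A Khintchine-type representation is the natural tool: a concave CDF $H$ is a mixture of uniform distributions $U[0,u]$, possibly with an atom at $0$. The dominance property is preserved under such mixtures, except that $X_1,X_2$ use independent mixing variables, so I would condition on both scale parameters $u_1,u_2$. This reduces the problem to $Y_i\sim U[0,u_i]$, i.e.\ $X_i=1/(u_iV_i)$ with $V_i$ uniform. Each such $X_i$ is a scaled Pareto(1) variable with $P(X_i>t)=\min(1,1/(u_it))$.

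The core step is then a direct computation. For $X_1,X_2$ independent Pareto(1) with scales $s_1=1/u_1$ and $s_2=1/u_2$, I need to show that $P(\lambda X_1+(1-\lambda)X_2>x)$ is nondecreasing as $\lambda$ moves toward $1/2$, in the sense required by the symmetrised claim. The catch is that with different scales the problem is not symmetric in $\lambda$. So I would instead use the mixture differently. The pair $(u_1,u_2)$ is exchangeable after mixing, so it suffices to show that the symmetrised function
\[
\psi(\lambda)=P_{u_1,u_2}(\lambda)+P_{u_2,u_1}(\lambda)
\]
is monotone on $[0,1/2]$ for each fixed pair $u_1,u_2$. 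This is an explicit integral of $\min(1,\cdot)$ expressions, and I would verify it by differentiation in $\lambda$, splitting into the regions where the truncation at $1$ is active.

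I expect this last computation to be the main obstacle: handling the piecewise structure of the Pareto tails with unequal scales, and checking the sign of the derivative of the symmetrised tail probability. If it becomes unwieldy, my first fallback would be the known equal-scale fact that $\lambda X_1+(1-\lambda)X_2\geq_{st}X_1$ for Pareto(1), as in \cite{superPareto2024}, combined with a refined coupling argument.
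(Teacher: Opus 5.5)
\paragraph{Verdict.} The paper gives no proof of this result; it quotes it from \cite{Chen2025Combinations}. Your argument therefore has to stand on its own, and as written it has a gap at its only nontrivial step.

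\paragraph{What is sound.} Your reductions are correct:
\begin{itemize}
\item T-transforms, together with closure of $\geq_{st}$ under independent sums and transitivity, reduce the theorem to two summands.
\item Exchangeability reduces it further to showing that $\phi$ is nonincreasing on $[0,\tfrac12]$.
\item Concavity of $H$ makes $X\stackrel{d}{=}SP$, with $P(P>t)=\min(1,1/t)$ and $S\ge0$ independent of $P$. A small correction: the possible atom of $Y=1/X$ sits at $+\infty$, coming from $P(X=0)>0$, i.e.\ scale $S=0$. It is not at $0$.
\item Since $S_1,S_2$ are i.i.d., it suffices that the symmetrised conditional CDF $\psi$ be nonincreasing for every fixed pair of scales $s_1,s_2$.
\end{itemize}

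\paragraph{The gap.} That last claim is the entire content of the theorem, and you only announce that you would verify it by differentiation. This is not a routine check. For $s_1\neq s_2$, one of the two terms of $\psi$ is typically increasing in $\lambda$; for example, with $s_1=0.9$, $s_2=0.1$, $x=1$, the term $P((1-\lambda)s_1P_1+\lambda s_2P_2\le1)$ rises from $0.1$ to about $0.43$. So the sign of $\psi'$ comes from a cancellation across several regimes.

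Your fallback does not replace this step. The fact $\lambda P_1+(1-\lambda)P_2\geq_{st}P_1$ only compares an interior weight with the endpoint $\lambda=0$, not two interior weights. Neither it nor the equal-total Schur property of Pareto(1) sums applies to the vectors $(\lambda s_1,(1-\lambda)s_2)$ and $((1-\lambda)s_1,\lambda s_2)$, whose totals differ and vary with $\lambda$.

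\paragraph{How to close it.} The missing step has a structural proof that makes your route work. Rescale so that $x=1$. For i.i.d.\ Pareto(1) $P_1,P_2$ and $a,b\ge0$, direct integration gives $P(aP_1+bP_2\le1)=\Gamma(a+b,ab)$, where
\[
\Gamma(c,p)=1-c-p\log\frac{1-c+p}{p}\ \ (c<1),\qquad \Gamma(c,p)=0\ \ (c\ge1),\qquad \Gamma(c,0)=(1-c)_+ .
\]
Two properties follow:
\begin{itemize}
\item With $k=1-c$, one has $\partial_p\Gamma=\frac{k}{k+p}-\log\bigl(1+\frac{k}{p}\bigr)\le0$, so $\Gamma$ is nonincreasing in $p$.
\item $\partial_c\Gamma=-\frac{1-c}{1-c+p}$ is nondecreasing and vanishes at $c=1$, so $\Gamma(\cdot,p)$ is convex.
\end{itemize}
The two terms of $\psi(\lambda)$ share the same product $p(\lambda)=\lambda(1-\lambda)s_1s_2$, which is increasing on $[0,\tfrac12]$. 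Their sums $c_1(\lambda)=\lambda s_1+(1-\lambda)s_2$ and $c_2(\lambda)=(1-\lambda)s_1+\lambda s_2$ have constant total and move toward each other. Take $0\le\lambda\le\lambda'\le\tfrac12$ and write $c_i=c_i(\lambda)$, $c_i'=c_i(\lambda')$, $p=p(\lambda)\le p'=p(\lambda')$. Then
\[
\psi(\lambda')=\Gamma(c_1',p')+\Gamma(c_2',p')\le\Gamma(c_1',p)+\Gamma(c_2',p)\le\Gamma(c_1,p)+\Gamma(c_2,p)=\psi(\lambda).
\]
The first inequality uses monotonicity in $p$; the second uses convexity in $c$, since $c_1',c_2'$ lie between $c_1,c_2$ with the same sum. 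With this lemma written out, your argument becomes a complete proof; without it, the proposal is only a reduction.
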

Note that this result still implies the stochastic dominance (\ref{SD-2}). This same conclusion holds under the weaker assumption of subadditivity of $H(x)$, as proved in Theorem~3.1 in \cite{ArabConvex2025}. Hence, one may naturally wonder whether this latter and weaker assumption suffices to obtain (\ref{SD-cx}). The answer is negative. The following example shows that not even anti-starshapedness is enough.
\begin{example}
Consider the CDF defined as
$$
F(x)=\left\{\begin{array}{ccl}
0, & \qquad & \mbox{if }\,x\leq\frac14, \\[1.5ex]
0.8-\frac{0.2}{x}, & & \mbox{if }\,\frac14\leq x<\frac12, \\[1.5ex]
0.4, & & \mbox{if }\,\frac12\leq x<1, \\[1.5ex]
1-\frac{0.6}{x}, & & \mbox{if }\,x\geq 1.
\end{array}\right.
$$
It is easily seen that the inverted CDF $H(x)=1-F(\tfrac1x)$ is anti-starshaped, but not concave.
Considering, for example, the weight vectors $(0.2,\,0.8)\prec(0.3,\,0.7)$, one can verify that the corresponding CDFs cross, so {stochastic dominance does not hold}.
\end{example}
	\begin{remark}
\label{rem:size}
Note that majorization assumes that $\thetabar$ and $\etabar$ have the same number of components. {This restriction is easily overcome}. Indeed, every pair of vectors can be compared by filling with zeros the missing components of the shorter one. Take $\thetabar\in\R^s$ and $\etabar\in \R^t$, where $s>t$. Then, define the vector $\widetilde\eta = (\eta_1,\dots,\eta_t,\widetilde{0}_{s-t})\in\mathbb{R}^{s}$, where $\widetilde{0}_{s-t}$ is a vector with $s-t$ coordinates all equal to 0. In this case, for example, we can compare the vector $(\frac12,\frac12)$ with the $s$-dimensional vector $(\frac1s,\ldots,\frac1s)$, for $s>2$, as $(\frac12,\frac12,\widetilde{0}_{s-2})\prec(\frac1s,\ldots,\frac1s)$. 
\end{remark}

According to Remark~\ref{rem:size}, we may also compare linear combinations where the weight vectors have a different number of coordinates. This is particularly interesting as it allows to stochastically compare sample means of different sizes. Specifically, this means that, if $H(x)=1-F(\tfrac1x)$ is concave, the sample mean is stochastically monotone with respect to the sample size. 
In particular, among these sample means, the case $s=2$ is stochastically the smallest; hence it is the most stringent benchmark and a natural candidate for statistical testing. This idea allows for some hierarchical dominance results, and will be explored later in Section~\ref{sec:dominance-without}.

We conclude the section with a remark.
\begin{remark}
	The sufficient condition in Theorem~\ref{theorem chen} is formulated in terms of majorization, which is only a partial order. When $s=t=2$, majorization becomes a total order (for vectors with the same sum). In higher dimensions, incomparability is common—many pairs of vectors cannot be ordered—so the practical scope of the theorem is more limited. This does not mean that stochastic dominance cannot hold when weights are incomparable; rather, the sufficient condition is then silent and does not indicate whether dominance holds, nor in which direction. This provides additional motivation for a direct statistical verification of condition~(\ref{SD-cx}), which we pursue in Section~\ref{sec:testing}.
\end{remark}

\section{Stochastic dominance without distributional assumptions}
\label{sec:dominance-without}
{We have been interested in finding conditions {on the distribution of the underlying variables and the weights such that} (\ref{SD-cx}) holds}. However, this stochastic dominance can also be seen as a starting point to obtain infinitely many other dominance relations, using somewhat standard arguments in ordering theory. In this section, we will focus on the implications of (\ref{SD-cx}), focusing on sample means, indisputably the most important linear combinations based on a sample. 
We show that sample means satisfy a hierarchical property, whenever $\xbar{s}\geq_{st}\xbar{t}$. Namely, the latter condition implies $\xbar{ks}\geq_{st} \xbar{kt}$ for every positive integer $k$. This result can be extended using general linear combinations instead of just sample means. Moreover, we search for a result that can be seen as a special case of Theorem~\ref{theorem chen}, but does not rely on distributional assumptions on the underlying random variables. 
Such a result is possible if we assume $\xbar{s}\geq_{st} X$ plus a stronger type of ordering between the weight vectors, which can be seen as a special case of majorization, defined as follows. 
\begin{definition}
Let $\thetabar=(\theta_1,\ldots,\theta_k)$ be a vector of nonnegative weights such that $\sum_{i=1}^k \theta_i=1$ and $h\geq2$ an integer.
    \begin{enumerate}
    \item
    An $h$-split of $\thetabar$ is a $(k+h-1)$-dimensional vector $(\theta_1,\ldots,\theta_j^{(h)},\ldots,\theta_k)$ where, for some $j\in\{1,\ldots,k\}$, the coordinate $\theta_j$ is replaced by the $h$-dimensional vector $\theta_j^{(h)}=(\frac{\theta_j}{h},\cdots,\frac{\theta_j}{h})$.
    \item
    Given vectors $\thetabar$ and $\etabar$ with nonnegative components we say that $\thetabar$ is $h$-split majorized by $\etabar$, represented by $\thetabar\prec_s\etabar$, if $\thetabar$ can be obtained from $\etabar$ by a finite number of $h$-splits.
    \end{enumerate}
\end{definition}

It is immediate that $h$-split majorization implies the usual majorization, as given in Definition~\ref{def:maj}, after filling with zeros the missing components of the shorter vector.
Indeed, if $\thetabar$ is obtained from $\etabar\in\mathbb{R}^k$ by a finite number of $h$-splits, then
$\thetabar\in\mathbb{R}^{\ell}$ for some $\ell> k$. 
{Then $\thetabar \prec_h \etabar$ implies $\thetabar \prec \widetilde\eta$, where $\widetilde\eta = (\eta_1,\dots,\eta_k,\widetilde{0}_{\ell-k})\in\mathbb{R}^{\ell}$ is defined as in Remark \ref{rem:size}}.

We are now ready to state the main result of this section. Let $\otimes$ denote the Kronecker product between vectors.

\begin{proposition} \label{t1} \
    \begin{enumerate}
	\item 
    Consider weight vectors $\thetabar\in\mathbb{R}^s$ and $\etabar\in\mathbb{R}^t$. If $\sum_i\theta_iX_i\geq_{st}\sum_j\eta_jX_j$ then, for every vector $\overline{\pi}\in\mathbb{R}^k$,
    $$
    \sum_{i=1}^{sk} (\thetabar\otimes \overline{\pi})_iX_i\geq_{st}\sum_{j=1}^{tk}(\etabar\otimes \overline{\pi})_jX_j.
    $$ 
	\item 
    If $\xbar{s}\ge_{st} \xbar{t}$, then, for every positive integer $k$, $\xbar{ks}\ge_{st} \xbar{kt}$.
	\item 
    If $\xbar{s}\geq_{st} X$ and $\thetabar\prec_s \etabar$ with nonnegative coordinates, then $\sum_i\theta_iX_i\geq_{st}\sum_j\eta_jX_j$.
	\item 
    If $\xbar{s}\ge_{st} \xbar{t}$, then, for every positive integer $k$,
	$$
	\xbar{s+k}\geq_{st} \frac{s}{s+k}\,\xbar{t}+\frac{k}{s+k}\,	\xbar{k}.
	$$
	\end{enumerate}
\end{proposition}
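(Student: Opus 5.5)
The whole argument rests on one standard fact. The usual stochastic order $\geq_{st}$ is closed under convolution with independent variables, i.e.\ if $Y_i\geq_{st} Z_i$ for $i=1,\ldots,k$, with the $Y_i$ mutually independent and the $Z_i$ mutually independent, then $\sum_i Y_i\geq_{st}\sum_i Z_i$ (see \cite{shaked}, Theorem~1.A.3). It is also closed under multiplication by nonnegative constants. I would implicitly assume that the weight vectors, and $\overline{\pi}$, have nonnegative coordinates.

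\textbf{Item 1.} Write
$$
\sum_{i=1}^{sk}(\thetabar\otimes\overline{\pi})_iX_i=\sum_{l=1}^k \pi_l\Big(\sum_{i=1}^s\theta_iX_{(l-1)s+i}\Big).
$$
The $k$ inner blocks are independent copies of $\sum_i\theta_iX_i$, each scaled by $\pi_l\geq0$. The right-hand side decomposes the same way into independent blocks distributed as $\pi_l\sum_j\eta_jX_j$. Scaling preserves the blockwise dominance, and closure under independent sums gives the conclusion. Here the ordering of the Kronecker coordinates is irrelevant, because the $X_i$ are \iid

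\textbf{Item 2.} This is item~1 with $\overline{\pi}=(1/k,\ldots,1/k)$. Indeed, $(\tfrac1s,\ldots,\tfrac1s)\otimes\overline{\pi}$ is the uniform vector of length $ks$, so the left side is $\xbar{ks}$, and similarly the right side is $\xbar{kt}$.

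\textbf{Item 3.} Since $\prec_s$ is generated by finitely many splits and $\geq_{st}$ is transitive, it suffices to treat a single split. Replacing $\eta_j$ by $(\eta_j/h,\ldots,\eta_j/h)$ turns the term $\eta_jX_j$ into $\eta_j\xbar{h}$, computed on fresh independent copies, while the remaining terms are unchanged and independent. So I need $\xbar{h}\geq_{st}X$; scaling by $\eta_j$ and adding the independent common part then finishes the step.

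Here lies the main obstacle: the hypothesis only gives $\xbar{s}\geq_{st}X$ for a fixed $s$. So I would read the notation $\prec_s$ as meaning $s$-splits, i.e.\ $h=s$, which makes the needed inequality exactly the hypothesis. If instead all $h$ were intended, the claim would require $\xbar{h}\geq_{st}X$ for every $h$, which cannot be derived from a single $s$. At most, iterating item~2 with $t=1$ yields $\xbar{s^m}\geq_{st}X$, so the natural consistent reading is splits of size $s$ (or powers of $s$).

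\textbf{Item 4.} Decompose
$$
\xbar{s+k}=\frac{s}{s+k}\,\xbar{s}+\frac{k}{s+k}\,\xbar{k}',
$$
where $\xbar{s}$ is the mean of the first $s$ observations and $\xbar{k}'$ the mean of the last $k$, and these two are independent. Apply the hypothesis $\xbar{s}\geq_{st}\xbar{t}$ to the first term, scaled by $s/(s+k)$. Then add the independent common term $\frac{k}{s+k}\xbar{k}$ to both sides, using closure under convolution. This gives the stated inequality, with $\xbar{t}$ and $\xbar{k}$ understood as independent.
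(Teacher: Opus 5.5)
Your proof is correct and follows the paper's argument essentially step by step. Closure of $\geq_{st}$ under nonnegative scaling and independent convolution gives items 1, 2 and 4 through the same block decompositions, and item 3 is reduced, as in the paper, to a single split of size $h=s$ chained by transitivity; your two caveats ($\overline{\pi}$ with nonnegative coordinates, and $\prec_s$ meaning splits of size $s$) are exactly the implicit assumptions the paper's own proof relies on.
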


\begin{figure}[h]
	\centering
	\begin{tikzpicture}[
		node/.style={draw, rounded corners, inner sep=3pt, font=\small},
		arr/.style={-{Stealth[length=2mm]}, line width=0.5pt}
		]
		
		\def\N{24}         
		\def\Cols{6}       
		\def\xstep{2.0}    
		\def\ystep{1.6}    
		
		\foreach \i in {1,...,\N} {
			\pgfmathtruncatemacro{\col}{mod(\i-1,\Cols)}
			\pgfmathtruncatemacro{\row}{(\i-1)/\Cols}
			\pgfmathsetmacro{\x}{\col*\xstep}
			\pgfmathsetmacro{\y}{-\row*\ystep}
			
			\ifnum\i=1
			\node[node] (x\i) at (\x,\y) {$X$};
			\else
			\node[node] (x\i) at (\x,\y) {$\xbar{\i}$};
			\fi
		}
		\draw[arr] (x2) -- (x1);
        \draw[arr] (x3) -- (x2);
        \draw[arr] (x4) to [bend right=20] (x2);
        \draw[arr] (x6) to [bend right=20] (x4); \draw[arr] (x6) to [bend right=25] (x3);
        \draw[arr] (x8) -- (x4);
        \draw[arr] (x9) -- (x3); \draw[arr] (x9) -- (x6);
        \draw[arr] (x10) -- (x5);
        \draw[arr] (x12) -- (x4); \draw[arr] (x12) -- (x6);
        \draw[arr] (x14) -- (x7);
        \draw[arr] (x15) -- (x10); 
        \draw[arr] (x16) -- (x8);
        \draw[arr] (x18) -- (x9); \draw[arr] (x18) -- (x12); 
        \draw[arr] (x20) to [bend left=20] (x10);
        \draw[arr] (x21) -- (x14);
        \draw[arr] (x22) -- (x11);
        \draw[arr] (x24) -- (x16); \draw[arr] (x24) to [bend right=30] (x12);
	\end{tikzpicture}
	\caption{Dominance relations implied by $\xbar{3}\geq_{st} \xbar{2} \geq_{st} X,$ with sample means of size $n\le 24$.}
	\label{fig:dominance-network-24}
\end{figure}

A description of {some} dominance relations that can be derived from Proposition~\ref{t1}, given some starting points, is given in Figure~\ref{fig:dominance-network-24}.

It is worth  noting that Proposition~\ref{t1} only proves strict implications. Even in the case of the sample mean, in general, $\xbar{s}\geq_{st}\xbar{t}$ does not imply that  $\xbar{k}\geq_{st}\xbar{t}$, for $k>s$. From Example~2.1 in \cite{Vincent2025}, it can be seen that the distribution of the St.\ Petersbourg lottery, represented as $X\stackrel{d}{=}2^Y$, where $Y$ is a geometric random variable, satisfies $\xbar{2} \geq_{st}X$ but $\xbar{3} \not\geq_{st}X$. This result can be proved applying Theorem 4.2 of \cite{Vincent2025}.

{

\section{Estimating the distribution of a linear combination}
\label{sec:estimator}
We start now paving the way to tackle the problem of testing if condition (\ref{SD-cx}) holds, which will be done in Section~\ref{sec:testing}. To perform inference on this stochastic dominance relation, we first need an estimator of the distribution of weighted sums of \iid\ random variables. This estimator can be obtained by applying the plug-in method to an iteration of the weighted convolution operator, that extends the usual convolution, defined as follows.

\begin{definition}
\label{def:wei-conv-op}
Let $\theta_1$ and $\theta_2$ be nonnegative real numbers and $F$ and $G$ two CDFs. We define the weighted-convolution {operator as}
\begin{equation}
\label{eq:wei-conv-op}
L_{\theta_1,\theta_2,F}(G)(x)=\int G(\tfrac{x-\theta_2t}{\theta_1})\,dF(t),\qquad x\in\mathbb{R}.
\end{equation}
\end{definition} 
This operator reduces to the usual convolution when taking $\theta_1=\theta_2=1$, hence it defines an extension of the convolution product. Moreover, it is easily seen that, if $X_1$ and $X_2$ are \iid\ with CDF $F$, $L_{\theta_1,\theta_2,F}(F)(x)=P(\theta_1 X_1+\theta_2 X_2\leq x)$. For higher dimensional sums, a suitable iterated application of the operator gives a characterisation of the distribution function, as it is easily verified that, given $X_1,\ldots,X_s$ \iid\ with CDF $F$ and weights $\thetabar=(\theta_1,\ldots\theta_s)$,
\begin{equation}
\label{eq:sum-n}
\begin{array}{rcl}
F_{\thetabar}(x) & = & P(\theta_1{X}_1+\cdots+\theta_s{X}_s\leq x) \\[1ex]
 & = & L_{1,\theta_s,F}\circ L_{1,\theta_{s-1},F}\circ\cdots\circ L_{1,\theta_3,F}\circ L_{\theta_1,\theta_2,F}(F)(x)
=: L_{\thetabar}(F).
\end{array}
\end{equation}
Given a random sample $\widetilde X_1,\ldots,\widetilde X_n$ of size $n$ from $X$, the empirical CDF is denoted, as usual, with $\F_n(x)=\frac1n\sum_{i=1}^n\mathbb{I}_{\{\widetilde X_i\leq x\}}$, hence the plug-in estimator of {$F_{\thetabar}$} is
\begin{equation}
\label{eq:sum-n-plug}
\mathbb{F}_{n,\thetabar}(x)=L_{1,\theta_s,\mathbb{F}_n}\circ L_{1,\theta_{s-1},\mathbb{F}_n}\circ\cdots\circ L_{1,\theta_3,\mathbb{F}_n}\circ L_{\theta_1,\theta_2,\mathbb{F}_n}(\mathbb{F}_n)(x)=L_{\thetabar}(\mathbb{F}_n)(x).
\end{equation}
Note that, for each fixed $x$, this estimator can be seen as a $V$-statistic of order $s$ \citep[Ch. 3.5.3]{shao}, that is,
$$ 
\mathbb{F}_{n,\thetabar}(x)=\frac{1}{n^s}
\sum_{i_1=1}^n\!\cdots\!\sum_{i_s=1}^n
\mathbb I_{\{\theta_1u_1+\cdots+\theta_su_s\le x\}}.
$$
In what follows, however, we use the weighted-convolution representation, which is more convenient for our purposes and allows us to derive the asymptotic and bootstrap results through empirical-process theory.

We now address the asymptotic properties of the estimator $\F_{n,\thetabar}$. The following result establishes the almost sure uniform convergence of $\F_{n,\thetabar}$ to $F_{\thetabar}$.
\begin{proposition}
\label{prop:unif-conv}
For every $\thetabar=(\theta_1,\ldots,\theta_s)\in(0,+\infty)^s$, $\lim_{n\rightarrow+\infty}\left\| \mathbb{F}_{n,\thetabar}-\mathbb{F}_{\thetabar}\right\|_\infty=0$ almost surely.
\end{proposition}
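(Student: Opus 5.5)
We argue by induction on the number of weights, using the iterated representation \eqref{eq:sum-n} and \eqref{eq:sum-n-plug}. At each stage we control the error through a telescoping decomposition that separates two sources: the perturbation of the integrator $F$ into $\F_n$, and the perturbation of the argument $G$. The only probabilistic input is the Glivenko--Cantelli theorem, $\|\F_n-F\|_\infty\to0$ almost surely. We work on the probability-one event where this holds; everything else is deterministic.

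\textbf{Step 1 (argument perturbation).} For any CDFs $G,G'$, any integrator $\mu$ and $\theta_1>0$, the definition \eqref{eq:wei-conv-op} gives $\|L_{\theta_1,\theta_2,\mu}(G)-L_{\theta_1,\theta_2,\mu}(G')\|_\infty\le\|G-G'\|_\infty$, since $\mu$ is a probability measure. This is why we need $\theta_1>0$ (and, in later stages, the first weight equal to $1$).

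\textbf{Step 2 (integrator perturbation, the main obstacle).} We show that for a fixed CDF $G$ and $\theta_2>0$,
\[
\sup_x\Bigl|\int G\bigl(\tfrac{x-\theta_2t}{\theta_1}\bigr)\,d(\F_n-F)(t)\Bigr|\to0 .
\]
Since the integrand $t\mapsto G((x-\theta_2t)/\theta_1)$ is monotone (nonincreasing) and takes values in $[0,1]$, we can integrate by parts, or use the layer-cake representation. Writing the integrand as $\int_0^1 \mathbb I\{G((x-\theta_2 t)/\theta_1)>u\}\,du$, each indicator is the indicator of a half-line in $t$. Hence the integral against $\F_n-F$ is bounded in absolute value by $2\|\F_n-F\|_\infty$, uniformly in $x$. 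The care needed here lies in the endpoints: whether each half-line is open or closed. To handle both cases we use both $\sup_t|\F_n(t)-F(t)|$ and $\sup_t|\F_n(t-)-F(t-)|$; the latter equals the former, so the bound $2\|\F_n-F\|_\infty$ holds. Equivalently, the class of monotone functions with values in $[0,1]$ is Glivenko--Cantelli, which gives the same conclusion directly. Note that the bound does not depend on $G$, so it also applies when $G$ is itself random (an empirical iterate).

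\textbf{Step 3 (induction).} Let $G_k$ and $\mathbb G_k$ denote the $k$-th iterates built with $F$ and with $\F_n$ respectively, with $G_1=F$ and $\mathbb G_1=\F_n$. We decompose
\[
L_{\cdot,\F_n}(\mathbb G_k)-L_{\cdot,F}(G_k)=\bigl[L_{\cdot,\F_n}(\mathbb G_k)-L_{\cdot,\F_n}(G_k)\bigr]+\bigl[L_{\cdot,\F_n}(G_k)-L_{\cdot,F}(G_k)\bigr].
\]
By Step 1, the first bracket is at most $\|\mathbb G_k-G_k\|_\infty$. By Step 2, the second bracket is at most $2\|\F_n-F\|_\infty$. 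Therefore $\|\mathbb G_{k+1}-G_{k+1}\|_\infty\le\|\mathbb G_k-G_k\|_\infty+2\|\F_n-F\|_\infty$. Iterating from $k=1$ up to $s$ yields
\[
\|\F_{n,\thetabar}-F_{\thetabar}\|_\infty\le(2s-1)\|\F_n-F\|_\infty ,
\]
which tends to $0$ almost surely by Glivenko--Cantelli.
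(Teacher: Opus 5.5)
Your proof is correct, and its skeleton matches the paper's: a telescoping split into an ``argument'' error and an ``integrator'' error, each controlled by Glivenko--Cantelli. The difference is in how the integrator term is handled.

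The paper treats only $s=2$, with the split $A_n+B_n$. There the argument error is integrated against $F$, and the integrator error $B_n=\int \F_n(\frac{x-\theta_2 t}{\theta_1})\,d(\F_n-F)(t)$ carries the random argument $\F_n$. The paper then uses commutativity of convolution to swap roles, rewriting $B_n$ as $\int(\F_{\frac{\theta_2}{\theta_1},n}-F_{\frac{\theta_2}{\theta_1}})(\frac{x}{\theta_1}-u)\,d\F_n(u)$. After the swap every error term has the form ``empirical CDF minus true CDF, integrated against a probability measure.'' The general case is only asserted to follow by iteration.

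You instead place the deterministic $G_k$ in the integrator term. You then bound $\int g\,d(\F_n-F)$ for an arbitrary nonincreasing $g$ with values in $[0,1]$ via the layer-cake representation. This lemma is more general: it does not require the integrand to be a CDF that can be exchanged with the integrator. You also carry out the induction explicitly, which gives the deterministic inequality $\|\F_{n,\thetabar}-F_{\thetabar}\|_\infty\le C_s\|\F_n-F\|_\infty$; this yields rates through the DKW inequality as well. The paper's swap is shorter but relies on the convolution structure.

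One small correction: your layer-cake argument actually gives the bound $\|\F_n-F\|_\infty$, not $2\|\F_n-F\|_\infty$. Each level set is $\emptyset$, $\R$, $(-\infty,a]$ or $(-\infty,a)$, whose $(\F_n-F)$-mass is $0$, $0$, $\F_n(a)-F(a)$ or $\F_n(a-)-F(a-)$, respectively. So your constant can be taken as $C_s=s$, matching what the paper's iteration would give.
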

Besides the almost sure convergence of {$F_{n,\thetabar}$}, we need to describe the weak convergence of the associated empirical process. For this purpose, the Hadamard differentiability of {$L_{\thetabar}$} is an essential tool.
\begin{proposition}
	\label{prop:Hadamard}
	Given a vector of weights $\thetabar=(\theta_1,\ldots,\theta_s){\in\mathbb{R}^s}$, a CDF $F$, and $h\in \ell^\infty(\mathbb{R})$, {the Hadamard derivative of $L_{\thetabar}$ at $F$ in the direction $h$ is given by}
	\begin{equation}
		\label{eq:hadamard}
		L_{\thetabar}'(F;h)=\sum_{(a_1,\ldots,a_s)\in
			{D}} L_{1,\theta_s,a_s}\circ L_{1,\theta_{s-1},a_{s-1}}\circ\cdots\circ L_{1,\theta_3,a_3}\circ L_{\theta_1,\theta_2,a_1},
	\end{equation}
	where $
	D=\{(h,F,\ldots,F),\,(F,h,F,\ldots,F),\ldots,(F,\ldots,F,h)\}$.
\end{proposition}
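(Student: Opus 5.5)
The plan is to exploit the multilinear structure of $L_{\thetabar}$. For fixed weights, $L_{\theta_1,\theta_2,F}(G)$ is linear in $G$ and linear in the measure $dF$. Iterating as in (\ref{eq:sum-n}), one can write $L_{\thetabar}(F)=M_{\thetabar}(F,\ldots,F)$, where
$$
M_{\thetabar}(a_1,\ldots,a_s)=L_{1,\theta_s,a_s}\circ\cdots\circ L_{1,\theta_3,a_3}\circ L_{\theta_1,\theta_2,a_1}(a_2)
$$
is an $s$-linear map. Here $a_2$ enters as the function being integrated and $a_1,a_3,\ldots,a_s$ enter as integrators. This matches the ordering in $D$, where the first coordinate is the integrator of the innermost operator and the second is its argument.

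The formula (\ref{eq:hadamard}) is then exactly the derivative of a multilinear form: one sums over the positions where a single copy of $F$ is replaced by $h$. So I would take $t_n\to0$ and $h_n\to h$ in $\ell^\infty(\R)$ with $F_n:=F+t_nh_n$ a CDF, and telescope:
$$
\frac{M_{\thetabar}(F_n,\ldots,F_n)-M_{\thetabar}(F,\ldots,F)}{t_n}=\sum_{k=1}^s M_{\thetabar}(F,\ldots,F,h_n,F_n,\ldots,F_n),
$$
with $h_n$ in slot $k$, $F$ in the slots before $k$ and $F_n$ in the slots after $k$. This identity is pure algebra from multilinearity. It remains to show that each summand converges uniformly in $x$ to the corresponding term with $h$ in slot $k$ and $F$ everywhere else.

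The main obstacle is that $h$ (and $h_n$) is only a bounded function, not a signed measure, so the slots where it plays the role of integrator need interpretation. To handle this I would use commutativity of weighted convolution. Every term equals $P$-type integrals of the form $\int h\big((x-\sum_{j\neq k}\theta_ju_j)/\theta_k\big)\,\prod_{j\ne k}dG_j(u_j)$, obtained by Fubini and, when needed, integration by parts, with $G_j\in\{F,F_n\}$. Each term can therefore be rewritten with $h$ in the evaluated-function position and only genuine CDFs as integrators. This is the sense in which I read (\ref{eq:hadamard}).

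With this representation the convergence splits into two pieces:
$$
\Big|\int h_n(\cdot)\,d\mu_n-\int h(\cdot)\,d\mu\Big|\le\|h_n-h\|_\infty+\Big|\int h(\cdot)\,d(\mu_n-\mu)\Big|,
$$
where $\mu_n$ and $\mu$ are the product laws built from $F_n$ and $F$ on the remaining slots. The first piece tends to $0$ trivially. For the second, $\|F_n-F\|_\infty=t_n\|h_n\|_\infty\to0$, so $\mu_n\to\mu$ weakly. This gives convergence for $h$ uniformly continuous, and uniformity in $x$ follows from equicontinuity of the family of shifts of $h$.

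For the general case I would restrict to the tangent set relevant for the empirical process, namely $h=\mathbb{B}\circ F$ with $\mathbb{B}$ a continuous Brownian bridge path. I would then approximate $h$ uniformly by such functions and use the total-variation bound $|\int g\,d(\mu_n-\mu)|\le 2s\|g\|_\infty$, which controls the approximation error. Finally, linearity and continuity of the limit map in $h$ follow from boundedness of each term by $\|h\|_\infty$, completing Hadamard differentiability tangentially to that set.
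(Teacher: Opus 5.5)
Your route differs from the paper's, and in an instructive way.

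\textbf{How the two proofs differ.} The paper expands $L_{\theta_1,\theta_2,F+th_t}(F+th_t)$ by bilinearity and reads off $L_{\theta_1,\theta_2,F}(h)+L_{\theta_1,\theta_2,h}(F)$. It then reaches general $s$ through $L_{\thetabar}(F)=L_{1,\theta_s,F}(L_{\thetabar_{-s}}(F))$ and the chain rule. It never says what integrating against a merely bounded $h$ means. Nor does it explain why the remainder $t\,L_{\theta_1,\theta_2,h_t}(h_t)$ vanishes uniformly in $x$. This is not automatic: $th_t=F_t-F$ has total variation of order one, so the crude bound is only $2\|h_t\|_\infty$.

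Your $s$-fold telescoping, with $F$ before and $F_n$ after the perturbed slot, absorbs that quadratic term into integrals against CDFs that converge weakly. The Fubini rewriting is legitimate because $h_n=(F_n-F)/t_n$ is right-continuous, of bounded variation, and satisfies $h_n(-\infty)=0$. Weak convergence of the product laws, together with uniform equicontinuity of $\{u\mapsto h((x-\sum_{j\ne k}\theta_ju_j)/\theta_k)\}_{x\in\R}$ (Ranga Rao's theorem), gives uniformity in $x$. So for bounded uniformly continuous directions, your argument is a rigorous version of what the paper only sketches. The price is differentiability tangentially to a subspace rather than on all of $\ell^\infty(\R)$. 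That is all the functional delta method needs, and the literal claim is not even meaningful for nonmeasurable $h$.

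\textbf{The gap in your last paragraph.} That step does not work as written.
\begin{itemize}
\item If $F$ is continuous, $\mathbb{B}\circ F$ is already bounded and uniformly continuous, since a continuous CDF is uniformly continuous on $\R$. No approximation is needed and you are done. This is the case used in Theorem~\ref{thm:weak-Fthetabar}.
\item If $F$ has atoms, $\mathbb{B}\circ F$ has jumps there for almost every path. It cannot be a uniform limit of continuous functions, so your approximation step fails. For such $F$ your proof does not reach the generality of the statement.
\end{itemize}

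What these c\`adl\`ag directions need is a different approximating class: finite step functions $\tilde h=\sum_ic_i\mathbb{I}_{[b_i,\infty)}$. This is possible because $\mathbb{B}\circ F$ is c\`adl\`ag and vanishes at $\pm\infty$. For such $\tilde h$,
\[
\int\tilde h\bigl((x-\textstyle\sum_{j\ne k}\theta_ju_j)/\theta_k\bigr)\,d\mu_n(u)=\sum_ic_iG_n(x-\theta_kb_i),
\]
where $G_n$ is the CDF of $\sum_{j\ne k}\theta_jU_j$ with $(U_j)_{j\ne k}\sim\mu_n$.

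Each weighted convolution is $1$-Lipschitz in sup norm in each CDF argument; this is the swap used in the paper's proof of Proposition~\ref{prop:unif-conv}. Hence $\|G_n-G\|_\infty\le(s-k)\|F_n-F\|_\infty\to0$, and the convergence is uniform in $x$. Your total-variation bound then controls the approximation error as you intended; note that it is really $2\|g\|_\infty$ rather than $2s\|g\|_\infty$.
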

Next, we prove the weak convergence of the empirical process associated with $\mathbb{F}_{n,\thetabar}$, namely,  $\sqrt{n}\,\left(\mathbb{F}_{n,\thetabar}-F_{\thetabar}\right)$, and characterise the main properties of the limiting process. 
\begin{theorem}
\label{thm:weak-Fthetabar}
Let $F$ be a continuous CDF. Then, for every $\thetabar=(\theta_1,\ldots,\theta_s)\in(0,+\infty)^s$, there exists a 
centred Gaussian process $\mathbb{H}_{\thetabar}^F$ such that 
    \begin{enumerate}
    \item 
    when $n\rightarrow+\infty$,
    $\sqrt{n}\,\left(\mathbb{F}_{n,\thetabar}-F_{\thetabar}\right)\rightsquigarrow\mathbb{H}_{\thetabar}^F=L_{\thetabar}'(F;\mathbb{B}_F)$,
    where $\mathbb{B}$ is a Brownian bridge, $\BB=\mathbb B\circ F$, and $L_{\thetabar}'(F;\BB)$ is the Hadamard derivative of $L_{\thetabar}$ at $F$ in the direction of $\BB$, given by (\ref{eq:hadamard}).
    \item
    let $\thetabar_{-j}=(\theta_1,\ldots,\theta_{j-1},\theta_{j+1},\ldots,\theta_s)$ be the {$(s-1)$-dimensional} vector obtained by removing the $j$-th coordinate, then
    \begin{equation}
    \label{eq:procH}
    \mathbb{H}_{\thetabar}^F=\sum_{j=1}^s  \int \BB\left(\tfrac{x-t}{\theta_j}\right)\, dF_{\thetabar_{-j}}(t)\ =\sum_{j=1}^s  \BB\left(\tfrac{\cdot}{\theta_j}\right) * F_{\thetabar_{-j}}.
    \end{equation}
    \item
    the covariance operator of $\mathbb{H}_{\thetabar}^F$ is given by
    $\operatorname{Cov}\left(\mathbb{H}_{\thetabar}^F(x),\mathbb{H}_{\thetabar}^F(y)\right)=\sum_{j,k}\Omega_{j,k}(x,y)$, where
 $$
    \Omega_{j,k}(x,y)=\left\{\begin{array}{lcl}
    	\displaystyle
    	\int F_{\thetabar_{-j}}(x-\theta_ju)F_{\thetabar_{-k}}(y-\theta_ku)\,dF(u)  - F_{\thetabar}(x)F_{\thetabar}(y),
    	& \quad & \mbox{if }j\not=k, \\[4ex]
    	\displaystyle
    	\int F_{\thetabar_{-j}}(x\wedge y-\theta_ju)\,dF(u) - F_{\thetabar}(x)F_{\thetabar}(y),
    	& \quad & \mbox{if }j=k.
    \end{array}\right.
    $$
    
    \end{enumerate}
\end{theorem}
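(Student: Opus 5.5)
Part 1 follows from the functional delta method \citep[Theorem 3.9.4]{vw}. By Donsker's theorem, $\sqrt{n}(\F_n-F)\rightsquigarrow \BB=\mathbb{B}\circ F$ in $\ell^\infty(\R)$. Since $\F_{n,\thetabar}=L_{\thetabar}(\F_n)$ and $F_{\thetabar}=L_{\thetabar}(F)$, it remains to check that $L_{\thetabar}$ is Hadamard differentiable at $F$ tangentially to a set containing the support of $\BB$. Proposition~\ref{prop:Hadamard} gives the derivative. The domain issue needs care: $L_{\thetabar}$ is defined on CDFs, or more generally on bounded functions of bounded variation, because $F$ also appears as an integrator. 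So I will apply the delta method on the domain of CDFs, with tangent set the continuous functions vanishing at $\pm\infty$. Since $F$ is continuous, $\BB$ lies in $C_0(\R)$ almost surely, and this is where continuity of $F$ enters. The derivative is linear and continuous on $\ell^\infty$, with each composed term bounded by $\|h\|_\infty$, so the limit is the centred Gaussian $L'_{\thetabar}(F;\BB)$.

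\emph{Main obstacle.} The hardest step is the Hadamard differentiability in the integrator argument. Terms of the form $\int h(\cdot)\,d(F_n-F)$ must vanish when $h$ is continuous and $(F_n-F)/t_n\to h$. I would handle this by integration by parts, or by approximating $h$ uniformly by step functions. For this part I would rely on Proposition~\ref{prop:Hadamard} as stated.

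Part 2 is a direct computation. Consider the term of \eqref{eq:hadamard} in which $h=\BB$ sits in position $j$. For $j\ge3$ the operator $L_{1,\theta_j,\BB}$ integrates against $d\BB$. Integrating by parts, or equivalently using exchangeability of the convolution, each such term equals the distribution of $\theta_j Y_j+\sum_{i\ne j}\theta_iX_i$ with the ``CDF'' of $Y_j$ replaced by $\BB$. The term can therefore be rewritten as $\int \BB\bigl(\tfrac{x-t}{\theta_j}\bigr)\,dF_{\thetabar_{-j}}(t)$. Concretely, I would note that the weighted convolution is symmetric and associative, so for a signed finite measure in slot $j$ the iterated operator equals $G_j(\cdot/\theta_j)*F_{\thetabar_{-j}}$. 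By linearity and continuity in $h$, this identity extends from differences of CDFs to all $h\in C_0$. Summing over $j$ gives \eqref{eq:procH}.

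Part 3 uses \eqref{eq:procH} and Fubini. We have
\[
\operatorname{Cov}(\mathbb{H}(x),\mathbb{H}(y))=\sum_{j,k}\iint \operatorname{Cov}\Bigl(\mathbb{B}\bigl(F(\tfrac{x-t}{\theta_j})\bigr),\mathbb{B}\bigl(F(\tfrac{y-r}{\theta_k})\bigr)\Bigr)\,dF_{\thetabar_{-j}}(t)\,dF_{\thetabar_{-k}}(r),
\]
and the covariance inside is $F(a\wedge b)-F(a)F(b)$. The product terms integrate to $F_{\thetabar}(x)F_{\thetabar}(y)$. For the $\min$ term, I write $F(a\wedge b)=\int \mathbb{I}\{u\le a\}\mathbb{I}\{u\le b\}\,dF(u)$ and exchange the integrals. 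This gives
\[
\int F_{\thetabar_{-j}}(x-\theta_ju)\,F_{\thetabar_{-k}}(y-\theta_ku)\,dF(u).
\]
When $j=k$, both indicators come from the same representation and combine into a single condition. Here I expect an additional check that this reduces to the stated $\int F_{\thetabar_{-j}}(x\wedge y-\theta_ju)\,dF(u)$ form.
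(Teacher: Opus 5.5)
Your Parts 1 and 2 follow the paper's route: Donsker's theorem, the functional delta method with Proposition~\ref{prop:Hadamard}, then integrating out the $F$-slots and exchanging the order of integration. They are in fact more careful than the paper, which applies Fubini formally to $d\BB$. Proving the identity for signed measures in slot $j$ and extending it by sup-norm continuity to $h\in C_0(\R)$ is the right way to give meaning to \eqref{eq:hadamard} in the direction $\BB$. Continuity of $F$ is indeed what places $\BB$ in that tangent set; Donsker's theorem itself does not need it. Make sure the bound by $\|h\|_\infty$ in Part 1 is claimed for the rewritten form of Part 2. In \eqref{eq:hadamard}, $h$ appears as an integrator, so that map is not defined on all of $\ell^\infty(\R)$.

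The genuine problem is the last step of Part 3. Your formula $\int F_{\thetabar_{-j}}(x-\theta_ju)\,F_{\thetabar_{-k}}(y-\theta_ku)\,dF(u)-F_{\thetabar}(x)F_{\thetabar}(y)$ is correct and stays valid for $j=k$. The ``additional check'' that it reduces to $\int F_{\thetabar_{-j}}(x\wedge y-\theta_ju)\,dF(u)-F_{\thetabar}(x)F_{\thetabar}(y)$ will fail, and your justification for it is the error.

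Even when $j=k$, the two copies of $\BB$ are integrated against the product measure $dF_{\thetabar_{-j}}(t)\,dF_{\thetabar_{-j}}(r)$, with independent variables $t$ and $r$. So the indicators $\mathbb{I}\{t\le x-\theta_ju\}$ and $\mathbb{I}\{r\le y-\theta_ju\}$ never merge into one condition. Consequently $F_{\thetabar_{-j}}(x-\theta_ju)F_{\thetabar_{-j}}(y-\theta_ju)\neq F_{\thetabar_{-j}}(x\wedge y-\theta_ju)$ unless $F_{\thetabar_{-j}}$ is $\{0,1\}$-valued.

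The stated diagonal formula is actually false. Take $F$ uniform on $[0,1]$, $\thetabar=(\tfrac12,\tfrac12)$ and $x=y=\tfrac12$. Then $F_{\thetabar_{-j}}(\tfrac12-\tfrac u2)=1-u$ and $F_{\thetabar}(\tfrac12)=\tfrac12$. Your formula gives $\Omega_{j,j}=\int_0^1(1-u)^2\,du-\tfrac14=\tfrac1{12}$ and total variance $\tfrac13$. This agrees with the Hoeffding projection of the $V$-statistic, $4\operatorname{Var}(1-X_1)=\tfrac13$. The statement instead gives $\Omega_{j,j}=\tfrac14$ and total variance $\tfrac23$.

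The paper's proof passes over this point at ``the conclusion follows''. Its own Corollary~\ref{cor:samp-means}(3) uses the product form for all $s^2$ pairs. So do not attempt the reduction. Stop at your formula, which is the correct covariance, with the same expression for $j=k$ and $j\neq k$, and note that part 3 as stated cannot be proved.
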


\subsection{The case of Cauchy distributed variables}
\label{sub:Cauchy}
Let us denote by $\mathcal{C}(x)=\frac{1}{\pi }\arctan(x)+\frac{1}{2}$, $x\in \R$, the standard Cauchy CDF. This special case is of particular interest, as linear combinations of \iid\ such variables follow a scaled Cauchy distribution.
{In particular, when the components of the weight vector sum up to 1, the resulting (convex) combination still has distribution $\mathcal C$.} 
Hence, we may find a simplified representation of the limiting process $\mathcal{H}_{\thetabar}^{\mathcal{C}}$. Assume then that $X_1,\ldots,X_s$ are \iid\ with CDF $\mathcal{C}$, and the weights $\thetabar=(\theta_1,\ldots,\theta_s)\in(0,+\infty)^s$ are such that $\|\thetabar\|=1$, for simplicity. 
Therefore,
$\mathcal{C}_{\thetabar_{-j}}(x)=P\left(\sum_{i\not=j}\theta_iX_i\leq x\right)=\mathcal{C}\left(\tfrac{x}{1-\theta_j}\right)$. We may state a specific version of Theorem~\ref{thm:weak-Fthetabar}, whose proof follows immediately from the relation between $\mathcal{C}_{\thetabar_{-j}}$ and $\mathcal{C}$.
\begin{corollary}
\label{cor:emp-Cauchy}
Let $F=\mathcal C.$ Then
\begin{enumerate}
\item
the limit process derived in Theorem~\ref{thm:weak-Fthetabar} reduces to
\begin{equation}
\label{eq:procC}
\mathbb{H}_{\thetabar}^{\mathcal{C}}=\sum_{j=1}^s  \int {\mathbb B_{\mathcal C}}\left(\tfrac{x-(1-\theta_j)t}{\theta_j}\right)\, d\mathcal{C}(t)\ = \sum_{j=1}^s  \mathbb B_{\mathcal C}\left(\tfrac{\cdot}{\theta_j}\right)\, *\mathcal{C}(\tfrac{\cdot}{1-\theta_j});
\end{equation}
\item
the covariance operator of $\mathbb{H}_{\thetabar}^{\mathcal{C}}$ is given by
$\operatorname{Cov}\left(\mathbb{H}_{\thetabar}^{\mathcal{C}}(x),\mathbb{H}_{\thetabar}^{\mathcal{C}}(y)\right)=\sum_{j,k}\Omega_{j,k}^{\mathcal{C}}(x,y)$, where
 $$
    \Omega_{j,k}^{\mathcal{C}}(x,y)=\left\{\begin{array}{lcl}
    	\displaystyle
    	\begin{array}{l}
    		\displaystyle\int \mathcal{C}\left(\tfrac{x-\theta_ju}{1-\theta_j}\right)\mathcal{C}\left(\tfrac{y-\theta_ku}{1-\theta_k}\right)\,d\mathcal{C}(u) - \mathcal C(x)\mathcal C(y),
    	\end{array}
    	& \qquad & \mbox{if }j\not=k, \\[4ex]
    	\displaystyle
    	\int \mathcal{C}\left(\tfrac{x\wedge y-\theta_ju}{1-\theta_j}\right)\,d\mathcal{C}(u) -  \mathcal C(x)\mathcal C(y),
    	& \qquad & \mbox{if }j=k.
    \end{array}\right.
    $$
\end{enumerate}
\end{corollary}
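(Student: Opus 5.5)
The plan is to specialise Theorem~\ref{thm:weak-Fthetabar} to $F=\mathcal C$; no new limit argument is needed. Since $\mathcal C$ is continuous and $\thetabar\in(0,+\infty)^s$, Theorem~\ref{thm:weak-Fthetabar} applies directly. It yields the limit $\mathbb H_{\thetabar}^{\mathcal C}$ in the form (\ref{eq:procH}) and the covariance $\sum_{j,k}\Omega_{j,k}$. What remains is to compute the distributions $\mathcal C_{\thetabar_{-j}}$ and $\mathcal C_{\thetabar}$ explicitly and substitute them.

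The first step is the stability property of the Cauchy law, which I would prove via characteristic functions. If $X_i$ are i.i.d.\ standard Cauchy, then $\sum_i a_iX_i$ with $a_i\ge0$ has characteristic function $\prod_i e^{-a_i|u|}=e^{-(\sum_i a_i)|u|}$. Hence $\sum_i a_iX_i$ is distributed as $(\sum_i a_i)X_1$. With $\sum_i\theta_i=1$ (the normalisation $\|\thetabar\|=1$ being read as the $\ell^1$ norm, i.e.\ a convex combination), this gives $\mathcal C_{\thetabar}=\mathcal C$. Removing coordinate $j$ gives $\mathcal C_{\thetabar_{-j}}(x)=\mathcal C\!\left(\tfrac{x}{1-\theta_j}\right)$, where $1-\theta_j>0$ when $s\ge2$ and all weights are positive.

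The second step is to substitute into (\ref{eq:procH}):
$$\int \mathbb B_{\mathcal C}\!\left(\tfrac{x-t}{\theta_j}\right)d\,\mathcal C\!\left(\tfrac{t}{1-\theta_j}\right).$$
The change of variable $t=(1-\theta_j)v$ turns this into
$$\int \mathbb B_{\mathcal C}\!\left(\tfrac{x-(1-\theta_j)v}{\theta_j}\right)d\mathcal C(v),$$
which is the first expression in (\ref{eq:procC}). The convolution form is just a rewriting, with $\mathcal C(\cdot/(1-\theta_j))$ denoting the law of $(1-\theta_j)X$.

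The third step handles the covariance. In $\Omega_{j,k}$ I replace $F_{\thetabar_{-j}}(x-\theta_ju)$ by $\mathcal C\!\left(\tfrac{x-\theta_ju}{1-\theta_j}\right)$, and I replace $F_{\thetabar}(x)F_{\thetabar}(y)$ by $\mathcal C(x)\mathcal C(y)$. The diagonal case $j=k$ is handled identically with $x\wedge y$.

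I expect no real obstacle here. The only point needing care is making sure the normalisation is $\sum\theta_i=1$, so that $\mathcal C_{\thetabar}=\mathcal C$ holds, and checking the change of variables inside the integral against a random integrand. The latter is pathwise, since $\mathbb B_{\mathcal C}$ is bounded and measurable, so the substitution is a deterministic identity for each path.
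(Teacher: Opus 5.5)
Your route is the paper's route. The paper also obtains the corollary by inserting $\mathcal{C}_{\thetabar}=\mathcal{C}$ and $\mathcal{C}_{\thetabar_{-j}}(x)=\mathcal{C}\bigl(\tfrac{x}{1-\theta_j}\bigr)$ into Theorem~\ref{thm:weak-Fthetabar}. Your characteristic-function proof of stability, the reading of $\|\thetabar\|=1$ as $\sum_i\theta_i=1$, the change of variables $t=(1-\theta_j)v$ and the $j\neq k$ substitution are all correct. Part~1 and the off-diagonal case of part~2 are therefore fine.

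The diagonal case $j=k$ fails, and your step ``handled identically with $x\wedge y$'' is exactly where. It copies the $j=k$ formula of Theorem~\ref{thm:weak-Fthetabar}(3), and that formula is itself wrong. Compute the covariance directly from the representation in part~1. The $j$-th and $k$-th summands are integrals against $d\mathcal{C}(t)$ and $d\mathcal{C}(r)$, with $t$ and $r$ integrated \emph{independently}, including when $j=k$. Writing $\mathcal{C}(a\wedge b)=\int\mathbb{I}_{\{u\le a\}}\mathbb{I}_{\{u\le b\}}\,d\mathcal{C}(u)$ and applying Fubini gives, for every pair $(j,k)$,
\[
\Omega^{\mathcal C}_{j,k}(x,y)=\int \mathcal{C}\Bigl(\tfrac{x-\theta_ju}{1-\theta_j}\Bigr)\,\mathcal{C}\Bigl(\tfrac{y-\theta_ku}{1-\theta_k}\Bigr)\,d\mathcal{C}(u)-\mathcal{C}(x)\mathcal{C}(y).
\]
By the stability property you proved, the stated diagonal term collapses to $\mathcal{C}(x\wedge y)-\mathcal{C}(x)\mathcal{C}(y)$. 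This is what one gets by wrongly setting $t=r$.

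A concrete check: take $s=2$, $\thetabar=(\tfrac12,\tfrac12)$ and $x=y=0$. Then $\mathcal{C}(-u)=1-\mathcal{C}(u)$ and $\mathcal{C}(X)$ is uniform. The true variance is $4\,(\tfrac13-\tfrac14)=\tfrac13$. This agrees with the H\'ajek projection $2n^{-1/2}\sum_i\bigl(\mathcal{C}(-\widetilde X_i)-\tfrac12\bigr)$ of the $V$-statistic $n^{-2}\sum_{i,k}\mathbb{I}_{\{\widetilde X_i+\widetilde X_k\le 0\}}$. The stated formula instead gives $2\cdot\tfrac14+2\cdot\tfrac1{12}=\tfrac23$. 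So the $j=k$ case of part~2 is false as stated, and no substitution argument can establish it. The correct statement uses the product formula above for all $j,k$.
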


\subsection{A further special case: sample means}
Sample means play an important role in statistics, so we give a closer look to the representations corresponding to this case, where $\thetabar=(\tfrac1s,\ldots,\tfrac1s)$. As the weights are now all equal, it is natural to seek expressions where integration is taken with respect to the usual convolution powers. We introduce some specific notation. The convolution power, as mentioned after Definition~\ref{def:wei-conv-op}, is $\conv{F}{2}=F\ast F=L_{1,1,F}(F)$, and it is easily seen that $\conv{F}{s}=L_{1,1,F}(\conv{F}{(s-1)})$. As for the Cauchy case, we start be characterising the distributions of the form F$_{\thetabar_{-j}}$, to see how this reflects on the further integrations. Taking into account that the coordinates of $\thetabar$ are equal to $\frac1s$, considering $X_1,\ldots,X_s$ \iid\ with CDF $F$, we have
$$F_{\thetabar_{-j}}(x)=F_{\xbar{s-1}}\left(\tfrac{sx}{s-1}\right).$$
Rewrite
$$
\int\BB\left(\tfrac{x-t}{\theta_j}\right)\,dF_{\thetabar_{-j}}(t)=\int\BB\left(\tfrac{x-t}{1/s}\right)\,dF_{\thetabar_{-j}}(t)=\int\BB\left(sx-u\right)\,dF_{\thetabar_{-j},s}(u),
$$
where $F_{\thetabar_{-j},s}(u)=F_{\thetabar_{-j}}(\frac{u}{s})=F_{\xbar{s-1}}(\tfrac{x}{s-1})=\conv{F}{(s-1)}(x)$. Hence, recalling (\ref{eq:procH}), the first assertion in following statement is an immediate consequence of this arguing. The second assertion is a consequence of the fact that power convolutions of a Cauchy distribution coincide with the Cauchy distribution. {Finally, a direct translation of the representation of the covariance operator described in Theorem~\ref{thm:weak-Fthetabar} (assertion 3.), taking into account the rewriting of $F_{\thetabar_{-j}}$ as above, yields the third assertion.}
\begin{corollary}
\label{cor:samp-means}
Let $F$ be continuous, $s\geq 2$ be fixed, take $\thetabar=(\tfrac1s,\ldots,\tfrac1s)$, and represent the limit of the corresponding empirical process, according to Theorem~\ref{thm:weak-Fthetabar}, as $\mathbb{H}_s^F$. 
    \begin{enumerate}
    \item
    Then $\mathbb{H}_s^F(x)=s\BB\ast\conv{F}{(s-1)}(sx)$. 
    \item
    In the particular case where $X$ is Cauchy distributed, 
    $$
    \mathbb{H}_s^{\mathcal{C}}(x)=s \int {\mathbb B_{\mathcal C}}\bigl(sx-(s-1)t\bigr)\,d\mathcal C(t).
    $$
    \item
    {The covariance operator of the process is
    \begin{eqnarray*}
    \lefteqn{\operatorname{Cov}\left({\mathbb H_s^F(x),\mathbb H_s^F(y)}\right)} \\[1.5ex]
     & = &
    	s^2\left(\int \conv{F}{(s-1)}(sx-u)\,\conv{F}{(s-1)}(sy-u)\,dF(u)
    	-\conv{F}{s}(sx)\conv{F}{s}(sy)\right).
    \end{eqnarray*}}
    \end{enumerate}
\end{corollary}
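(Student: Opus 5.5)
This corollary follows by specialising Theorem~\ref{thm:weak-Fthetabar} to $\thetabar=(\tfrac1s,\ldots,\tfrac1s)$. Since all weights are equal, the $s$ summands in (\ref{eq:procH}) coincide, and so do the $s^2$ terms $\Omega_{j,k}$ of the covariance, up to the diagonal/off-diagonal distinction. The work is a change of variables that turns the scaled distributions $F_{\thetabar_{-j}}$ into convolution powers.

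\emph{First assertion.} Each $\thetabar_{-j}$ is the vector of $s-1$ weights equal to $\tfrac1s$. Hence $F_{\thetabar_{-j}}(t)=P(\sum_{i\le s-1}X_i\le st)=\conv{F}{(s-1)}(st)$, and the law of $S=s\,T$, with $T\sim F_{\thetabar_{-j}}$, is $\conv{F}{(s-1)}$. Substituting $u=st$ in each summand of (\ref{eq:procH}) gives $\int\BB(sx-u)\,d\conv{F}{(s-1)}(u)=(\BB\ast\conv{F}{(s-1)})(sx)$. This expression does not depend on $j$, so summing over $j$ gives the factor $s$. This is exactly the computation sketched before the statement; I only need to make the factor $s$ explicit.

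\emph{Second assertion.} For $F=\mathcal C$, the sum of $s-1$ i.i.d.\ standard Cauchy variables has CDF $\mathcal C(\tfrac{\cdot}{s-1})$. I write $\conv{\mathcal C}{(s-1)}(u)=\mathcal C(u/(s-1))$ and substitute $u=(s-1)t$ in the integral from the first assertion. This yields $s\int\mathbb B_{\mathcal C}(sx-(s-1)t)\,d\mathcal C(t)$. Equivalently, it follows from Corollary~\ref{cor:emp-Cauchy} with $\theta_j=\tfrac1s$.

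\emph{Third assertion.} I would not translate the $\Omega_{j,k}$ term by term. Instead I would compute the covariance directly from the representation $\mathbb H_s^F(x)=s\int\BB(sx-u)\,d\conv{F}{(s-1)}(u)$, using $\operatorname{Cov}(\BB(a),\BB(b))=F(a\wedge b)-F(a)F(b)$ and Fubini, which is justified by boundedness. This gives
$s^2\iint[F((sx-u)\wedge(sy-v))-F(sx-u)F(sy-v)]\,d\conv{F}{(s-1)}(u)\,d\conv{F}{(s-1)}(v)$.
The product term factorises as $\conv{F}{s}(sx)\conv{F}{s}(sy)$.

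The main obstacle is the first term. I write $F(a\wedge b)=\int\mathbb I_{\{w\le a\}}\mathbb I_{\{w\le b\}}\,dF(w)$ and apply Fubini in the order $w$ outside, $(u,v)$ inside. For fixed $w$, the $u$-integral of $\mathbb I_{\{w\le sx-u\}}$ equals $\conv{F}{(s-1)}(sx-w)$, and similarly in $v$. The first term therefore becomes $\int\conv{F}{(s-1)}(sx-w)\conv{F}{(s-1)}(sy-w)\,dF(w)$, which is the claimed formula. This also confirms that the representation in the first assertion is consistent with the $\Omega_{j,k}$ of Theorem~\ref{thm:weak-Fthetabar}. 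Because $\BB$ is a single bridge shared by all summands, the $j\ne k$ and $j=k$ cases merge into this one expression.
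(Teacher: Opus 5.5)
Your proof is correct. For the first two assertions it is the paper's own argument: $F_{\thetabar_{-j}}(t)=\conv{F}{(s-1)}(st)$, the substitution $u=st$ in each summand of (\ref{eq:procH}), and the Cauchy scaling $\conv{\mathcal C}{(s-1)}(u)=\mathcal C\bigl(u/(s-1)\bigr)$. You state that scaling more precisely than the paper's remark that convolution powers of $\mathcal C$ ``coincide'' with $\mathcal C$.

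For the third assertion you take a genuinely different route, and it is the better one. The paper claims the formula is a direct translation of the $\Omega_{j,k}$ in Theorem~\ref{thm:weak-Fthetabar}. However, the printed diagonal term there, $\Omega_{j,j}(x,y)=\int F_{\thetabar_{-j}}(x\wedge y-\theta_ju)\,dF(u)-F_{\thetabar}(x)F_{\thetabar}(y)$, collapses to $F_{\thetabar}(x\wedge y)-F_{\thetabar}(x)F_{\thetabar}(y)$ and is incorrect. The Fubini step in that theorem's proof actually yields the off-diagonal expression for every pair, including $j=k$. A literal term-by-term translation of the printed $\Omega_{j,k}$ would give
\[
s(s-1)\Bigl[\int\conv{F}{(s-1)}(sx-u)\,\conv{F}{(s-1)}(sy-u)\,dF(u)-\conv{F}{s}(sx)\,\conv{F}{s}(sy)\Bigr]+s\Bigl[\conv{F}{s}\bigl(s(x\wedge y)\bigr)-\conv{F}{s}(sx)\,\conv{F}{s}(sy)\Bigr],
\]
which is not the stated formula. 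For instance, with $s=2$, $F$ standard normal and $x=y=0$ it equals $2/3$. The true asymptotic variance of the $V$-statistic, $4\operatorname{Var}(\Phi(-X))=1/3$, agrees with the stated formula and with yours.

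Your computation from the single-bridge representation $s\int\BB(sx-u)\,d\conv{F}{(s-1)}(u)$ sidesteps this, because the $j=k$ and $j\neq k$ contributions are automatically identical. The only price is the Fubini justification for exchanging expectation and integration. That holds because $\|\mathbb B\|_\infty$ is square integrable, which is slightly more than ``boundedness'' of the paths.

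The one sentence to correct is your closing claim that the computation ``confirms'' consistency with the $\Omega_{j,k}$ of Theorem~\ref{thm:weak-Fthetabar}. It agrees only with the off-diagonal terms. It in fact shows that the diagonal terms should read $\int F_{\thetabar_{-j}}(x-\theta_ju)\,F_{\thetabar_{-j}}(y-\theta_ju)\,dF(u)-F_{\thetabar}(x)F_{\thetabar}(y)$.
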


\section{Testing dominance between linear combinations}
\label{sec:testing}
                                                                                                                                                                                                                                                 
We consider a test for the null hypothesis} $\mathcal{H}_0: \sum_i\theta_i X_i\geq_{st}\sum_i\eta_i X_i$ {versus the alternative} $\mathcal{H}_1: \sum_i\theta_i X_i\not\geq_{st}\sum_i\eta_i X_i$, for given weight vectors $\thetabar=(\theta_1,\ldots,\theta_s)\in{(0,+\infty)^s}$ and $\etabar=(\eta_1,\ldots,\eta_t)\in{(0,+\infty)^t}$, where $s$ and $t$ are not necessarily equal. {As is customary in the literature on stochastic dominance tests (see Chapter 2.2.2 of \citet{whang2019} for a review), we consider a Kolmogorov–Smirnov-type test statistic, based on the sup-norm of the difference between empirical versions of $F_{\thetabar}$ and $F_{\etabar}$, namely,}
\begin{equation}
\label{eq:test-g}
T_{\thetabar,\etabar}(\F_n)=\left\| \mathbb{F}_{n,\thetabar}-\mathbb{F}_{n,\etabar}  \right\|_\infty,
\end{equation}
where $\mathbb{F}_{n,\thetabar}=L_{\thetabar}(\mathbb{F}_n)$ and $\mathbb{F}_{n,\etabar}=L_{\etabar}(\mathbb{F}_n)$ are given by (\ref{eq:sum-n-plug}). Differently from the classic problem of testing stochastic dominance, in our case the test statistic is a functional of only one empirical CDF, $\F_n$.
This test statistic is clearly location and scale invariant.
                                                                                                                                                                                                                         
We propose two different tests both based on $T_{\thetabar,\etabar}$. The first one, based on the Cauchy distribution, applies to weight vectors such that $\|\thetabar\|=\|\etabar\|=1$, {exploits} the fact that convex combinations of Cauchy variables are also Cauchy distributed, making it possible to identify the asymptotic least favourable distribution of the test statistic under the null hypothesis. 
Actually, this test may be adapted to general weight vectors by suitable scaling, thus at the cost of some extra notational complexity. The second test is based on a bootstrap procedure, and does not depend on any restriction on the weight vectors. {Both tests are consistent and have asymptotically bounded size under the null hypothesis.} 
For convenience, we shall be writing $F\in\mathcal{H}_0$ whenever this hypothesis is satisfied for random variables with CDF $F$, and similarly for other hypothesis under consideration.

\subsection{The test based on the Cauchy distribution}
Recall that for this subsection we will be assuming that $\|\thetabar\|=\|\etabar\|=1$. We have denoted the standard Cauchy CDF as $\mathcal{C}$ (see Subsection~\ref{sub:Cauchy}), so its stability with respect to convex combinations implies that $L_{\thetabar}(\mathcal{C})=\mathcal{C}$. Moreover, it is clear that $C\in\mathcal{H}_0$.
Therefore, taking into account the behaviour of the distribution of convex combinations of \iid\ random variables, one could expect that, if $F=\mathcal C$, $\mathbb{F}_{n,\thetabar}$ and $\mathbb{F}_{n,\etabar}$, {the estimated distributions} of $\sum_i\theta_iX_i$ and $\sum_i\eta_iX_i$, respectively, will be ``close'' to each other, and to $\mathcal C$, especially for large $n$, compared to the general case when $F\in \mathcal H_0$. 
In other words, we expect that the Cauchy case is the least favourable for the distribution of $T_{\thetabar,\etabar}$ under the null hypothesis. 
Focus now on the strict sub-null $\widetilde{\mathcal{H}}_0=\{F:\, F_{\thetabar}(x)<F_{\etabar}(x)\text{ a.e.} \}$, thus excluding boundary cases for which $F_{\thetabar}$ and $F_{\etabar}$ may coincide on a set of non-null Lebesgue measure. 
Note that, if $X$ has an {almost everywhere} analytic density function, which includes {all the standard} families of {continuous} distributions, the coincidence on an interval of the densities implies that they coincide in the whole domain, so this case is excluded from $\mathcal H_0 \setminus \widetilde{\mathcal H}_0$.
Apart from the special case of $\mathcal C$ ({or the trivial case of distributions which coincide to 0 outside their supports, which is irrelevant for our analysis}), we are not aware of any further examples of distributions in $\mathcal H_0 \setminus \widetilde{\mathcal H}_0$.

\begin{theorem}
\label{thm:Htilde}
Assume that $F$ is continuous. 
    \begin{enumerate}
    \item
    If $F\in\widetilde{\mathcal{H}}_0$, then $\sqrt n \left\|\mathbb{F}_{n,\thetabar}-\mathbb{F}_{n,\etabar}\right\|_\infty\to_p 0$.
    \item
    If $F=\mathcal C$, then $\sqrt n \left\|\mathbb{F}_{n,\thetabar}-\mathbb{F}_{n,\etabar}\right\|_\infty\to_p  \left\|\mathbb{H}_{\thetabar}^{\mathcal{C}}-\mathbb{H}_{\etabar}^{\mathcal{C}}\right\|_\infty\geq0$.
    \end{enumerate}
\end{theorem}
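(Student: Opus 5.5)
The plan is to treat both assertions through one decomposition of the rescaled difference:
$$
\sqrt n\bigl(\mathbb F_{n,\thetabar}-\mathbb F_{n,\etabar}\bigr)
=\sqrt n\bigl(L_{\thetabar}(\F_n)-L_{\thetabar}(F)\bigr)-\sqrt n\bigl(L_{\etabar}(\F_n)-L_{\etabar}(F)\bigr)+\sqrt n\bigl(F_{\thetabar}-F_{\etabar}\bigr).
$$
The first two terms come from the \emph{same} empirical process $\sqrt n(\F_n-F)$. Consider the map $\Phi=L_{\thetabar}-L_{\etabar}$. By Proposition~\ref{prop:Hadamard} it is Hadamard differentiable at $F$, with derivative $L'_{\thetabar}(F;\cdot)-L'_{\etabar}(F;\cdot)$, because the derivative is linear in each slot. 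The functional delta method, with $\sqrt n(\F_n-F)\rightsquigarrow\BB$ and $F$ continuous, then gives the joint statement
$$
\sqrt n\bigl(\Phi(\F_n)-\Phi(F)\bigr)\rightsquigarrow \mathbb H^F_{\thetabar}-\mathbb H^F_{\etabar}\quad\text{in }\ell^\infty(\R),
$$
where both limit processes are built from the single bridge $\BB$, as in Theorem~\ref{thm:weak-Fthetabar}. This is the key point: the two estimators must not be treated as independent.

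For assertion 2, take $F=\mathcal C$. The Cauchy stability $L_{\thetabar}(\mathcal C)=L_{\etabar}(\mathcal C)=\mathcal C$ gives $\Phi(\mathcal C)=0$, so the drift term vanishes identically. The continuous mapping theorem applied to $\|\cdot\|_\infty$ then gives convergence of $\sqrt n\|\mathbb F_{n,\thetabar}-\mathbb F_{n,\etabar}\|_\infty$ to $\|\mathbb H^{\mathcal C}_{\thetabar}-\mathbb H^{\mathcal C}_{\etabar}\|_\infty$. The limit processes are made explicit by Corollary~\ref{cor:emp-Cauchy}. The mode of convergence stated as $\to_p$ will be obtained by working on a Skorokhod (almost sure) representation of $\F_n$ and the bridge, as in \cite{vw}. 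On that common probability space, the delta method yields $\|\sqrt n\,\Phi(\F_n)-\Phi'_{\mathcal C}(\BB)\|_\infty\to 0$ in (outer) probability, and the reverse triangle inequality transfers this to the norms. Finally, $\|\cdot\|_\infty\ge0$ is trivial.

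For assertion 1, the same joint weak convergence shows that the stochastic part is $O_p(1)$ uniformly, so everything hinges on the deterministic drift $\sqrt n(F_{\thetabar}-F_{\etabar})$. Under $\widetilde{\mathcal H}_0$ this drift is strictly negative a.e. and tends to $-\infty$ there, while it is $0$ only on a Lebesgue-null set and at $\pm\infty$. I would split $\R$ into two regions. The first is a compact set $K_\varepsilon$ on which $F_{\etabar}-F_{\thetabar}\ge\delta>0$; there the drift dominates, and with probability tending to one the positive part of the rescaled difference vanishes. The second is the complement, consisting of the tails together with a small neighbourhood of the null contact set. There the stochastic part is small, because the variance of the limit process vanishes: its covariance, given in Theorem~\ref{thm:weak-Fthetabar}, tends to $0$ as $F_{\thetabar},F_{\etabar}\to0$ or $1$, and by continuity on the small neighbourhood. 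Choosing $\varepsilon$ and then letting $n\to\infty$ would give the claimed convergence to $0$.

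The main obstacle is this last step, and specifically the sign. The argument controls only the part of the difference where $\mathbb F_{n,\thetabar}\ge\mathbb F_{n,\etabar}$. On $K_\varepsilon$ the absolute value $|\mathbb F_{n,\thetabar}-\mathbb F_{n,\etabar}|$ is of order $\delta$, so $\sqrt n$ times it diverges. The assertion with the two-sided norm therefore only goes through if the norm is read as the one-sided (positive-part) supremum that governs rejection under $\mathcal H_0$. I would check this point against the paper's definition of the rejection region before finalising, and present the argument for the one-sided sup. The tail and null-set region is handled by tightness plus the vanishing covariance, which I expect to be routine.
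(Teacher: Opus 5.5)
\textbf{Agreement with the paper.} Your treatment of the stochastic part is the paper's. It also applies the delta method to $(L_{\thetabar},L_{\etabar})$ and then the difference map, so that $\mathbb G:=\mathbb H^F_{\thetabar}-\mathbb H^F_{\etabar}$ is driven by a single bridge. For $F=\mathcal C$ the drift vanishes.

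Your diagnosis of the sign is correct, and it is what the paper actually proves. Its proof works throughout with $\sup\mathcal M(\mathbb F_{n,\thetabar}-\mathbb F_{n,\etabar})$, $\mathcal M(h)=h_+$, i.e.\ the one-sided supremum. In the Cauchy case it concludes $\to_d\sup(\mathbb H^{\mathcal C}_{\thetabar}-\mathbb H^{\mathcal C}_{\etabar})$.

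So use the one-sided supremum in assertion~2 as well. Your two-sided version is true, but Proposition~\ref{prop:test-conv-Cauchy} calibrates and applies the same statistic, and only the one-sided one is controlled by assertion~1. You can also drop the Skorokhod step: convergence in distribution is all that is proved and all that is used later.

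\textbf{Different route for assertion~1.} The paper invokes Hadamard directional differentiability of $h\mapsto h_+$ at $\delta=F_{\thetabar}-F_{\etabar}$, with derivative $0$ where $\delta<0$, together with the directional delta method of \citet{fang}. Your compact/tail split is an elementary substitute. It also exposes the delicate point that route passes over: since $\delta(x)\to0$ as $|x|\to\infty$, the difference quotient vanishes uniformly only along directions that vanish at $\pm\infty$. For $h\equiv1$, $\|(\delta+th)_+\|_\infty/t=1$ for every $t>0$.

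To make your tail step rigorous, argue with paths rather than variances:
\begin{enumerate}
\item By (\ref{eq:procH}) and dominated convergence, every path of $\mathbb G$ tends to $0$ at $\pm\infty$, so $P(\sup_{|x|>M}|\mathbb G(x)|\ge\varepsilon)\to0$ as $M\to\infty$.
\item The portmanteau theorem, applied to a closed set in $\ell^\infty(\R)$, transfers this to $\sqrt n(\Phi(\mathbb F_n)-\Phi(F))$.
\item On $[-M,M]$, continuity of $F_{\thetabar},F_{\etabar}$ gives $\delta\le-c_M<0$. Since $\|\sqrt n(\Phi(\mathbb F_n)-\Phi(F))\|_\infty=O_p(1)$, the positive part vanishes there with probability tending to one.
\end{enumerate}

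\textbf{The genuine gap.} The step that fails is the ``small neighbourhood of the null contact set''. Suppose $\delta(x_0)=0$ at some finite $x_0$. Continuity makes $\mathbb G$ close to $\mathbb G(x_0)$ near $x_0$, not close to $0$. Moreover, $\mathbb G(x_0)$ is centred normal with variance $\operatorname{Var}_F\bigl(g(X)\bigr)$, where $g(u)=\sum_jF_{\thetabar_{-j}}(x_0-\theta_ju)-\sum_kF_{\etabar_{-k}}(x_0-\eta_ku)$, and this variance is positive in general. Since $\sqrt n(\mathbb F_{n,\thetabar}-\mathbb F_{n,\etabar})(x_0)\to_d\mathbb G(x_0)$, the one-sided supremum then has a nondegenerate limit. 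So assertion~1 cannot be proved under the a.e.\ form of $\widetilde{\mathcal H}_0$.

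You need $F_{\thetabar}(x)<F_{\etabar}(x)$ for every $x\in\R$; with that hypothesis your argument is complete. This is exactly the reduction the paper makes at the start of its proof. However, its justification (continuity of the Gaussian paths) fails for the same reason: at a contact point the directional derivative is $\max(0,h(x_0))$, not $0$. So state it as an assumption rather than try to derive it.
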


We now describe the test based on this particular behaviour of the Cauchy distribution. 

\bigskip

\textbf{The Cauchy based test:}
\textsl{
    \begin{enumerate}
    \item
    Denote with $\mathbb C_n$ the empirical CDF obtained by sampling from the distribution $\mathcal C$.
    \item
    Generate, via the Monte Carlo method, the critical value $c_{\alpha,n}$ as the $(1-\alpha)$-quantile of $\sqrt{n}\,T_{\thetabar,\etabar}(\mathbb{C}_n)$.
    \item
    Reject $\mathcal{H}_0$ if ${\sqrt{n}}\,T_{\thetabar,\etabar}(F_n)\geq c_{\alpha, n}$, where $F_n$ is a realization of $\F_n$.
    \item
    The $p$-value is given by $P(T_{\thetabar,\etabar}(\mathbb C_n)\geq T_{\thetabar,\etabar}(F_n)).$
    \end{enumerate}}
This test has the appropriate properties, as shown in the next statement.
\begin{proposition}
\label{prop:test-conv-Cauchy}
Assume that $F$ is continuous.
    \begin{enumerate}
    \item
    {If $F=\mathcal C$, $\lim_{n\to\infty}P(\text{reject }\mathcal{H}_0)=\alpha.$}
	\item 
     If $F\in\widetilde{\mathcal{H}}_0$, $\lim_{n\to\infty}P(\text{reject }\mathcal{H}_0)=0$. 
	\item 
    If $F\in\mathcal{H}_1$, $\lim_{n\to\infty}P(\text{reject }\mathcal{H}_0)=1$.
	\end{enumerate}
\end{proposition}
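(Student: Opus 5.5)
The plan is to first pin down the asymptotic behaviour of the critical value $c_{\alpha,n}$, and then treat the three regimes by comparing $\sqrt n\,T_{\thetabar,\etabar}(\F_n)$ with that limit.

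\emph{Limit of the critical value.} By Theorem~\ref{thm:Htilde} (assertion 2), $\sqrt n\,T_{\thetabar,\etabar}(\mathbb C_n)\to_d Z:=\|\mathbb H_{\thetabar}^{\mathcal C}-\mathbb H_{\etabar}^{\mathcal C}\|_\infty$. Since $\mathbb C_n$ is generated from the known law $\mathcal C$, the Monte Carlo quantile $c_{\alpha,n}$ is the $(1-\alpha)$-quantile of a law converging weakly to that of $Z$. So $c_{\alpha,n}\to c_\alpha$, the $(1-\alpha)$-quantile of $Z$, provided $c_\alpha$ is a continuity point of the distribution of $Z$ and that distribution is strictly increasing at $c_\alpha$. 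I would obtain this from the standard fact that the supremum of a non-degenerate centred Gaussian process has a continuous distribution on $(0,\infty)$, with positive density there (Tsirelson-type results).

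Non-degeneracy amounts to $\mathbb H_{\thetabar}^{\mathcal C}\neq\mathbb H_{\etabar}^{\mathcal C}$ as processes. I expect this to be the main technical obstacle. To check it, I would use the explicit covariance in Corollary~\ref{cor:emp-Cauchy} and exhibit one $x$ where the variance of $\mathbb H_{\thetabar}^{\mathcal C}(x)-\mathbb H_{\etabar}^{\mathcal C}(x)$ is positive whenever $\thetabar$ and $\etabar$ are not permutations of each other. If this cannot be verified in general, I would state it as a mild condition, noting that the degenerate case makes the test trivial.

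\emph{Assertion 1.} When $F=\mathcal C$, the statistic has exactly the same law as $\sqrt n\,T_{\thetabar,\etabar}(\mathbb C_n)$. Hence
\[
P(\text{reject})=P\bigl(\sqrt n\,T\ge c_{\alpha,n}\bigr)\to P(Z\ge c_\alpha)=\alpha .
\]
This uses Slutsky together with the continuity of the law of $Z$ at $c_\alpha$.

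\emph{Assertion 2.} When $F\in\widetilde{\mathcal H}_0$, Theorem~\ref{thm:Htilde} (assertion 1) gives $\sqrt n\,T_{\thetabar,\etabar}(\F_n)\to_p0$. Since $\alpha<1$, the limit satisfies $c_\alpha>0$, because $Z$ has no atom at $0$ by non-degeneracy. Therefore $P(\sqrt n\,T\ge c_{\alpha,n})\le P(\sqrt n\,T\ge c_\alpha/2)+o(1)\to0$.

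\emph{Assertion 3.} Under $\mathcal H_1$ there exists $x_0$ with $F_{\thetabar}(x_0)>F_{\etabar}(x_0)$, so $\delta:=\|F_{\thetabar}-F_{\etabar}\|_\infty>0$. By Proposition~\ref{prop:unif-conv} and the triangle inequality, $T_{\thetabar,\etabar}(\F_n)\to\delta$ almost surely, so $\sqrt n\,T_{\thetabar,\etabar}(\F_n)\to\infty$ almost surely. Since $c_{\alpha,n}\to c_\alpha<\infty$, the rejection probability tends to one. Continuity of $F$ is only needed in assertions 1 and 2, through Theorem~\ref{thm:Htilde}.
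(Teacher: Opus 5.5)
Your proposal is correct and follows essentially the same route as the paper's proof. First, $c_{\alpha,n}\to c_\alpha\in(0,\infty)$ by Theorem~\ref{thm:Htilde}(2). Then $\sqrt n\,T_{\thetabar,\etabar}(\F_n)\to_p 0$ under $\widetilde{\mathcal H}_0$, while $\sqrt n\,T_{\thetabar,\etabar}(\F_n)\to\infty$ under $\mathcal H_1$ by uniform consistency.

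The paper takes for granted that $\mathbb H^{\mathcal C}_{\thetabar}-\mathbb H^{\mathcal C}_{\etabar}$ is non-degenerate and that the law of its sup-norm is continuous and strictly increasing near $c_\alpha$; you rightly make these explicit. Your caveat is warranted: when $\thetabar$ is a permutation of $\etabar$, $T_{\thetabar,\etabar}\equiv 0$ and assertion 1 genuinely fails, whereas in every other case non-degeneracy does follow from the representation in Corollary~\ref{cor:emp-Cauchy}.
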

Although the boundary case when $F\in \mathcal H_0 \setminus \widetilde{\mathcal H}_0$ seems to be practically negligible, the behaviour of the test in this case could not be established. In fact, the analysis would require a much finer control of the covariance structure of the limiting Gaussian process, which we are currently unable to obtain. For this reason, we develop and study a second approach based on the bootstrap.

\subsection{A bootstrap test}
{We now propose an alternative test, where critical values are computed via bootstrap, in the spirit of \cite{barrett2003} (see also \cite{lando2025new} or \cite{barrett2014consistent}).}
Under the null hypothesis, $F_{\thetabar}(x)-F_{\etabar}(x)\leq0$ for every $x\in\R$. Therefore,
\begin{equation}
\label{eq:upper}
\mathbb{F}_{n,\thetabar}(x)-\mathbb{F}_{n,\etabar}(x)\leq \left(\mathbb{F}_{n,\thetabar}(x)-\mathbb{F}_{n,\etabar}(x)\right)-\left(F_{\thetabar}(x)-F_{\etabar}(x)\right),
\end{equation}
which implies that
\begin{equation}
\label{eq:upper1}
\sqrt{n}\,T_{\thetabar,\etabar}(\F_n)\leq \sqrt{n}\left\|\left(\left(\F_{n,\thetabar}(x)-\F_{n,\etabar}(x)\right)-\left(F_{\thetabar}(x)-F_{\etabar}(x)\right)\right)_+  \right\|_\infty\to_d \left\| \mathbb{H}_{\thetabar}^F-\mathbb{H}_{\etabar}^F\right\|_\infty.
\end{equation}
The rejection region for this test will be based in the distribution of $\left\|\mathbb{H}_{\thetabar}^F-\mathbb{H}_{\etabar}^F\right\|_\infty$, thus defining a conservative test. For practical matters, the distribution of $\left\|\mathbb{H}_{\thetabar}^F-\mathbb{H}_{\etabar}^F\right\|_\infty$ can be approximated by bootstrap.
Before describing the test, we justify the bootstrapping procedure. Let $\F_n^b(x)=\frac1n\sum_{i=1}^n M_i\mathbb{I}_{\{\widetilde X_i\leq x\}}$, where $M_i$ is the number of times $\widetilde X_i$ is resampled from the original sample, be the bootstrap estimator of $\F_n$. 
Correspondingly, we can compute $\F_{n,\thetabar}^b$ and define the bootstrap version of the bounding process in (\ref{eq:upper})
$$\mathcal{G}_n=\sqrt{n}\left((\F_{n,\thetabar}-\F_{n,\etabar})-(F_{\thetabar}-F_{\etabar})\right)$$
with 
$$
\mathcal{G}_n^b=\sqrt{n}\left(\left(\F_{n,\thetabar}^b-\F_{n,\etabar}^b\right)
    -\left(\F_{n,\thetabar}-\F_{n,\etabar}\right)\right).
$$
The functional delta method for the bootstrap implies that these two processes have the same limit:	$\mathcal{G}_n^b\rightsquigarrow \mathbb{H}_{\thetabar}^F-\mathbb{H}_{\etabar}^F$, which then implies that $\left\| \mathcal{G}_n^b\right\|_\infty\to_d \left\|{\mathbb{H}_{\thetabar}^F-\mathbb{H}_{\etabar}^F}\right\|_\infty$.

\bigskip

\textbf{The bootstrap test}
\textsl{
    \begin{enumerate}
    \item
    Build $b=1,\ldots,B$ resamples from the empirical CDF $F_n$ of the original sample, getting the corresponding empirical distribution functions $F_n^b$, as described above.
    \item
    Get the bootstrapped versions of the distribution of the weighted sums $F_{n,\thetabar}^b$ and $F_{n,\etabar}^b$.
    \item
    The bootstrap $p$-value, defined as $p=P\left(\left\|\mathcal{G}_n^b\right\|_\infty>\sqrt{n}\,T_{\thetabar,\etabar}(F_n)\right)$, is approximated by
    $$
    \widehat{p}= \frac1B\sum_{b=1}^B \mathbb{I}\left(\left\|\mathcal{G}_n^b\right\|_\infty>\sqrt{n}\,T_{\thetabar,\etabar}(F_n)\right).
    $$
    \item 
    The rejection region of the test with significance level $\alpha\in(0,1)$ is described by $\sqrt{n}\,T_{\thetabar,\etabar}(F_n)>c_{\alpha,n}$, where 
    \begin{equation}
    \label{eq:critical}
    c_{\alpha,n}=\inf\left\{y: P\left(\left\|\mathcal{G}^b_n\right\|_\infty>y|\widetilde X_1,\ldots,\widetilde X_n\right)\leq \alpha\right\}.
    \end{equation}
    \end{enumerate}
}
We may now prove the relevant asymptotic properties of the bootstrap test.
\begin{proposition}
\label{prop:test-conv-boot}
Assume $F$ is continuous.
	\begin{enumerate}
		\item If $F\in\mathcal{H}_0$, $\lim_{n\to\infty}P(\text{reject }\mathcal{H}_0)\leq \alpha$.
		\item If $F\in\mathcal{H}_1$, $\lim_{n\to\infty}P(\text{reject }\mathcal{H}_0)=1$.
	\end{enumerate}
\end{proposition}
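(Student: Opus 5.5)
The plan is to reduce both assertions to the behaviour of the bootstrap critical value $c_{\alpha,n}$ in (\ref{eq:critical}), together with the two facts already available. First, by Theorem~\ref{thm:weak-Fthetabar} applied to $\thetabar$ and $\etabar$ jointly, we have $\mathcal{G}_n\rightsquigarrow \mathbb{H}_{\thetabar}^F-\mathbb{H}_{\etabar}^F$. This holds because $(L_{\thetabar},L_{\etabar})$ is jointly Hadamard differentiable at $F$ by Proposition~\ref{prop:Hadamard}, and the functional delta method applied to the single process $\sqrt n(\F_n-F)\rightsquigarrow\BB$ gives the joint limit. Second, the bootstrap delta method gives $\mathcal{G}_n^b\rightsquigarrow \mathbb{H}_{\thetabar}^F-\mathbb{H}_{\etabar}^F$ conditionally, in probability. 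I will write $Z=\|\mathbb{H}_{\thetabar}^F-\mathbb{H}_{\etabar}^F\|_\infty$ and $c_\alpha$ for its $(1-\alpha)$-quantile. The continuous mapping theorem then gives $\|\mathcal{G}_n^b\|_\infty\to_d Z$ conditionally.

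\textbf{Key step: showing $c_{\alpha,n}\to_p c_\alpha$.} This needs the CDF of $Z$ to be continuous and strictly increasing at $c_\alpha$. I would obtain this from the fact that $Z$ is the sup-norm of a centred Gaussian process. By Tsirelson/Lifshits-type results, the law of the supremum of a separable Gaussian process is absolutely continuous on $(0,\infty)$, and its support is an interval $[0,\infty)$ or $[r_0,\infty)$. The degenerate case $Z\equiv 0$ can happen, for instance when $\mathbb{H}_{\thetabar}^F=\mathbb{H}_{\etabar}^F$. In that case $c_{\alpha,n}\ge 0$ suffices for the argument below, with the limiting size possibly $0$. This continuity issue is where I expect the main technical obstacle. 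The fallback is the standard argument: for any $\epsilon>0$, with probability tending to one, $c_{\alpha,n}\ge c_{\alpha+\epsilon}-\epsilon$, where $c_{\alpha+\epsilon}$ denotes the $(1-\alpha-\epsilon)$-quantile of $Z$. Then let $\epsilon\downarrow0$ at continuity points.

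\textbf{Part 1.} Under $\mathcal H_0$ we have $F_{\thetabar}-F_{\etabar}\le0$, so (\ref{eq:upper1}) gives $\sqrt n\,T_{\thetabar,\etabar}(\F_n)\le\|(\mathcal{G}_n)_+\|_\infty\le\|\mathcal{G}_n\|_\infty$, and the latter converges in distribution to $Z$. Hence
$$P(\text{reject})\le P\bigl(\|\mathcal{G}_n\|_\infty>c_{\alpha,n}\bigr)\le P\bigl(\|\mathcal{G}_n\|_\infty>c_\alpha-\epsilon\bigr)+o(1).$$
By the portmanteau theorem, the limsup is at most $P(Z\ge c_\alpha-\epsilon)$. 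Letting $\epsilon\downarrow0$ through continuity points gives a bound of $P(Z\ge c_\alpha)\le\alpha$, using continuity of the law of $Z$ at $c_\alpha>0$. The case $c_\alpha=0$ is trivial, since then $P(Z>0)\le\alpha$ by the definition of the quantile.

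\textbf{Part 2.} Under $\mathcal H_1$ there is an $x_0$ with $\delta:=F_{\thetabar}(x_0)-F_{\etabar}(x_0)>0$. By Proposition~\ref{prop:unif-conv}, $T_{\thetabar,\etabar}(\F_n)\ge \F_{n,\thetabar}(x_0)-\F_{n,\etabar}(x_0)\to\delta$ almost surely, so $\sqrt n\,T_{\thetabar,\etabar}(\F_n)\to\infty$. Meanwhile $c_{\alpha,n}=O_p(1)$, because the conditional law of $\|\mathcal{G}_n^b\|_\infty$ converges to the tight law of $Z$. Therefore $P(\text{reject})\to1$.
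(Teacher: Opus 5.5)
Your proof follows the paper's route. Under $\mathcal H_0$ you use the bound (\ref{eq:upper1}); the bootstrap functional delta method gives $\|\mathcal G_n^b\|_\infty\to_d Z$ conditionally, so $c_{\alpha,n}$ approaches $c_\alpha$; and under $\mathcal H_1$ you set $\sqrt n\,T_{\thetabar,\etabar}(\F_n)\to\infty$ against $c_{\alpha,n}=O_p(1)$. At the one point the paper glosses, you are more careful. The paper only asserts that $c_\alpha$ is ``finite, positive and unique''. Your argument supplies what is actually needed to get from $P(Z\ge c_\alpha)$ down to $\alpha$: the bound $c_{\alpha,n}\ge c_\alpha-\epsilon$ with probability tending to one (which needs no continuity), the portmanteau theorem, and Tsirelson's theorem (no atom of $Z$ on $(0,\infty)$). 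As in the paper, your inequality $\sqrt n\,T_{\thetabar,\etabar}(\F_n)\le\|(\mathcal G_n)_+\|_\infty$ holds only if $T_{\thetabar,\etabar}$ is read as the one-sided statistic $\sup_x(\F_{n,\thetabar}-\F_{n,\etabar})(x)$, as in the proof of Theorem~\ref{thm:Htilde}. With the two-sided norm it fails under the strict null.

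Your treatment of the degenerate case, however, is wrong. Since $Z$ is the sup-norm of a centred Gaussian process, $P(Z>0)\in\{0,1\}$, so for $\alpha<1$ we have $c_\alpha=0$ exactly when $Z\equiv0$. In that case your portmanteau bound becomes $P(Z\ge-\epsilon)=1$, so nothing follows from $P(Z>0)\le\alpha$, and the case is not ``trivial''. Moreover, both $\sqrt n\,T_{\thetabar,\etabar}(\F_n)$ and $c_{\alpha,n}$ are then $o_p(1)$. So knowing $c_{\alpha,n}\ge0$ gives no control of $P\bigl(\sqrt n\,T_{\thetabar,\etabar}(\F_n)>c_{\alpha,n}\bigr)$; first-order asymptotics say nothing here. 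The fix is to exclude this case by assumption, requiring $\mathbb H_{\thetabar}^F-\mathbb H_{\etabar}^F\not\equiv0$. The paper does this tacitly when it asserts $c_\alpha>0$. The obvious degenerate instance, $\etabar$ a permutation of $\thetabar$, is harmless: then $\F_{n,\thetabar}=\F_{n,\etabar}$ identically, $T_{\thetabar,\etabar}(\F_n)=0$, and the strict-inequality test never rejects.
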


\section{{Simulations}}
\label{sec:simulations}
To give some insights on {the performance of the tests, we provide simulated rejection rates, computed at} significance level $\alpha=0.1$. We consider sample sizes $n=100$ or $n=500$. All simulations reported considered 1000 replications, either for bootstrapping or Monte Carlo simulations. In Subsection~\ref{sec:sim sample mean} we focus on testing dominance between the sample mean and the underlying variable, as this example is particularly simple and relevant. Then, in Subsection~\ref{sec:sim linear comb}, we extend our simulation study to a more general framework, to compare linear combinations with different weights and dimensions. This includes the case in which the weight vectors are not ordered in majorization, for which the theory discussed in Section \ref{sec:dominance-without} does not give any indication.

\subsection{Tests for $\xbar{s} \geq_{st} X$}
\label{sec:sim sample mean}
We first concentrate on testing the null hypothesis $\mathcal H_0:\xbar{s} \geq_{st} X,$ for values of $s=2,3,4.$ We focus on distributions that, for suitable choice of parameters, are known to satisfy this dominance relation, namely, the Pareto, loglogistic or the Fr\'{e}chet families of distributions, with CDFs $\mathcal P(x;sh)=1-\frac1{x^{sh}}$, $x\geq 1$, $\mathcal F(x;sh)=\exp(-x)^{-sh}$, $x\geq 0$ and $\mathcal L(x;sh)=\tfrac{x^{sh}}{1+x^{sh}}$, $x\geq0$, respectively, where $sh$ is a positive shape parameter in all three cases. 
Recall that $\mathcal H_0$ can only hold for distributions with infinite mean, which is the case {of} the families of distributions just mentioned for $sh\leq 1$. 
Differently, when the shape parameter exceeds one, the CDFs of the linear combinations cross, and it should be easier to detect departures from $\mathcal H_0$ as $sh$ gets larger. We are interested in discovering this behaviour by increasing the value of $sh$, as well as the sample size. For all these cases, the null hypothesis is always true, or always false, for every choice of $s$, provided that $sh\leq1$, or $sh>1$, respectively. Then, we analyse a case for which $\mathcal H_0$ is true for some values of $s$ and false for some others. In particular, let $X\stackrel{d}{=}0.45 Y+0.55 Z$, where $Y$ Bernoulli ($\tfrac12$) and $Z$ is a Pareto with $sh=1.$ For this model,  the null hypothesis is false for $s=2,3$ and true for $s=4$ (see Figure~\ref{plot2}).
We also consider the Students' $t$ distribution with $df$ degrees of freedom. In this case, the CDFs of the linear combinations cross for every choice of $df$, except for the case $df=1$, which coincides with the Cauchy distribution, where we expect to observe a power close to $\alpha$. 

As a general comment, not unexpectedly, the bootstrap test shows an overall better power behaviour. However, the running times are considerably smaller for the Cauchy-based test: {when simulating a comparison between some sample mean and the underlying random variable,} while the bootstrap test takes around 15 minutes running on a \texttt{i9} machine, the Cauchy-based test needs just a few seconds to find results when running on a \texttt{i7} machine. 

{The simulated power in the Pareto case is given in Table~\ref{sim.Pareto}. For this model, the CDFs cross for $sh>1$, but they are very close one to another, making it particularly difficult to detect non-dominance, especially for smaller values of $sh$ and $s$ (see Figure \ref{fig:Pareto}). The results show that the bootstrap test outperforms the Cauchy-based test. Moreover, the power increases with respect to $n$, $sh$ and $s$.}
\begin{table}
\centering
\begin{tabular}{r|cc|cc|cc|cc|cc}
 & \multicolumn{2}{c|}{$sh=1$} & \multicolumn{2}{c|}{$sh=2$} & \multicolumn{2}{c|}{$sh=3$} & \multicolumn{2}{c|}{$sh=4$} & \multicolumn{2}{c}{$sh=5$} \\ \hline
$n$ & $100$ & $500$ & $100$ & $500$ & $100$ & $500$ & $100$ & $500$ & $100$ & $500$ \\ \hline
\multicolumn{11}{c}{Bootstrap test} \\ \hline
$s=2$ & 0.000 & 0.000 & 0.051 & 0.104 & 0.133 & 0.358 & 0.208 & 0.585 & 0.271 & 0.709 \\
$s=3$ & 0.001 & 0.000 & 0.078 & 0.153 & 0.189 & 0.559 & 0.307 & 0.808 & 0.395 & 0.924 \\
$s=4$ & 0.000 & 0.000 & 0.067 & 0.208 & 0.206 & 0.684 & 0.329 & 0.889 & 0.433 & 0.965\\ \hline
\multicolumn{11}{c}{Cauchy-based test} \\ \hline
$s=2$ & 0.002 & 0.001 & 0.030 & 0.019 & 0.066 & 0.114 & 0.093 & 0.178 & 0.117 & 0.247 \\
$s=3$ & 0.000 & 0.000 & 0.017 & 0.034 & 0.052 & 0.206 & 0.121 & 0.438 & 0.176 & 0.542 \\
$s=4$ & 0.000 & 0.000 & 0.021 & 0.055 & 0.066 & 0.337 & 0.121 & 0.598 & 0.134 & 0.754 \\ \hline
\end{tabular}
\caption{{Test for $\xbar{s} \geq_{st} X$. }Simulated power with Pareto alternatives.}\label{sim.Pareto}
\end{table}

\begin{figure}
\centering
\begin{tabular}{ccc}
\includegraphics[scale=.375]{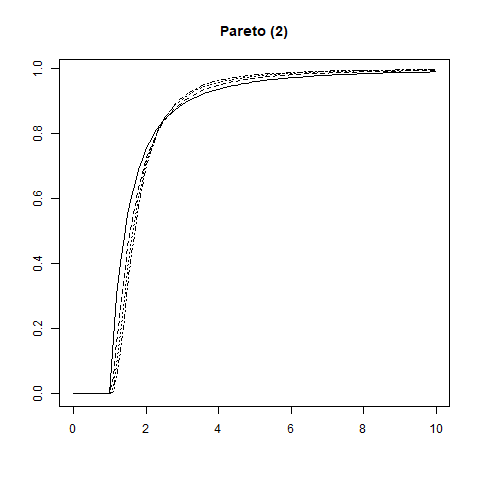} & \quad &
\includegraphics[scale=.375]{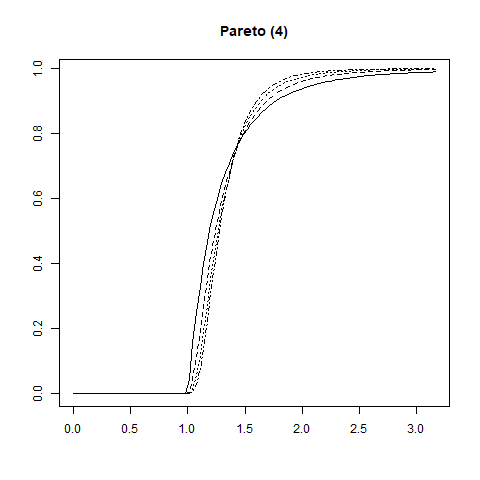}
\end{tabular}
\caption{CDF for the Pareto distribution (solid) and the corresponding CDFs for the sample mean of size $s=2$ (dashed), $s=3$ (dotted) and $s=4$ (dashdotted).}\label{fig:Pareto}
\end{figure}

The performance of the tests in the loglogistic and Fr\'{e}chet cases shows the same trend as in the Pareto case, with respect to {$n$, $s$ and $sh$.} In both these cases, the power behaviour of the test is better than the one observed for the Pareto alternatives, meaning that dominance of sample means for these distributions is easier to identify. The simulated powers are reported in Tables~\ref{sim.LogLogist} and \ref{sim.Frechet}. It is worth noticing that the Cauchy-based test behaves almost as well as the bootstrap test for the loglogistic alternatives, while it lacks power when dealing with the Pareto or Fr\'{e}chet alternatives.
\begin{table}
\centering
\begin{tabular}{r|cc|cc|cc|cc|cc}
 & \multicolumn{2}{c|}{$sh=1$} & \multicolumn{2}{c|}{$sh=2$} & \multicolumn{2}{c|}{$sh=3$} & \multicolumn{2}{c|}{$sh=4$} & \multicolumn{2}{c}{$sh=5$} \\ \hline
$n$ & $100$ & $500$ & $100$ & $500$ & $100$ & $500$ & $100$ & $500$ & $100$ & $500$ \\ \hline
\multicolumn{11}{c}{Bootstrap test} \\ \hline
$s=2$ & 0.002 & 0.000 & 0.109 & 0.206 & 0.304 & 0.678 & 0.451 & 0.888 & 0.557 & 0.956 \\
$s=3$ & 0.000 & 0.000 & 0.130 & 0.323 & 0.393 & 0.862 & 0.586 & 0.985 & 0.739 & 0.999 \\
$s=4$ & 0.000 & 0.000 & 0.145 & 0.423 & 0.402 & 0.941 & 0.637 & 0.998 & 0.767 & 1.000\\ \hline
\multicolumn{11}{c}{Cauchy-based test} \\ \hline
$s=2$ & 0.001 & 0.000 & 0.048 & 0.037 & 0.127 & 0.368 & 0.330 & 0.628 & 0.333 & 0.750 \\
$s=3$ & 0.002 & 0.000 & 0.039 & 0.118 & 0.185 & 0.661 & 0.358 & 0.895 & 0.444 & 0.979 \\
$s=4$ & 0.000 & 0.000 & 0.059 & 0.188 & 0.220 & 0.807 & 0.336 & 0.969 & 0.528 & 0.998 \\ \hline
\end{tabular}
\caption{{Test for $\xbar{s} \geq_{st} X$. }Simulated power with loglogistic alternatives.}\label{sim.LogLogist}
\end{table}
\begin{table}
\centering
\begin{tabular}{r|cc|cc|cc|cc|cc}
 & \multicolumn{2}{c|}{$sh=1$} & \multicolumn{2}{c|}{$sh=2$} & \multicolumn{2}{c|}{$sh=3$} & \multicolumn{2}{c|}{$sh=4$} & \multicolumn{2}{c}{$sh=5$} \\ \hline
$n$ & $100$ & $500$ & $100$ & $500$ & $100$ & $500$ & $100$ & $500$ & $100$ & $500$ \\ \hline
\multicolumn{11}{c}{Bootstrap test} \\ \hline
$s=2$ & 0.001 & 0.000 & 0.087 & 0.156 & 0.212 & 0.507 & 0.322 & 0.768 & 0.403 & 0.856 \\
$s=3$ & 0.001 & 0.000 & 0.097 & 0.254 & 0.279 & 0.748 & 0.432 & 0.940 & 0.558 & 0.987 \\
$s=4$ & 0.002 & 0.000 & 0.114 & 0.321 & 0.296 & 0.865 & 0.478 & 0.975 & 0.612 & 0.996\\ \hline
\multicolumn{11}{c}{Cauchy-based test} \\ \hline
$s=2$ & 0.001 & 0.000 & 0.053 & 0.041 & 0.096 & 0.203 & 0.181 & 0.379 & 0.228 & 0.557 \\
$s=3$ & 0.002 & 0.000 & 0.035 & 0.072 & 0.114 & 0.442 & 0.136 & 0.706 & 0.272 & 0.841 \\
$s=4$ & 0.001 & 0.000 & 0.029 & 0.132 & 0.143 & 0.554 & 0.186 & 0.866 & 0.258 & 0.948 \\ \hline
\end{tabular}
\caption{{Test for $\xbar{s} \geq_{st} X$. }Simulated power with Fr\'{e}chet alternatives.}\label{sim.Frechet}
\end{table}

{The general behaviour is confirmed also in the case of the Student's $t$ distribution, where the power gets larger whenever $df<1$ or $df>1$. As expected, when $df=1$, the simulated power is close to the significance level $\alpha=0.1$ we used for the numerical study. The simulated power values are reported in Table~\ref{sim.Student}.}
\begin{table}
\centering
\begin{tabular}{r|cc|cc|cc|cc|cc}
 & \multicolumn{2}{c|}{$df=.5$} & \multicolumn{2}{c|}{$df=1$} & \multicolumn{2}{c|}{$df=1.5$} & \multicolumn{2}{c|}{$df=3$} & \multicolumn{2}{c}{$df=5$} \\ \hline
$n$ & $100$ & $500$ & $100$ & $500$ & $100$ & $500$ & $100$ & $500$ & $100$ & $500$ \\ \hline
\multicolumn{11}{c}{Bootstrap test} \\ \hline
$s=2$ & 0.444 & 0.903 & 0.118 & 0.117 & 0.247 & 0.483 & 0.691 & 0.984 & 0.883 & 1.000 \\
$s=3$ & 0.513 & 0.953 & 0.122 & 0.102 & 0.290 & 0.696 & 0.805 & 1.000 & 0.954 & 1.000 \\
$s=4$ & 0.540 & 0.974 & 0.108 & 0.111 & 0.287 & 0.726 & 0.866 & 1.000 & 0.986 & 1.000\\ \hline
\multicolumn{11}{c}{Cauchy-based test} \\ \hline
$s=2$ & 0.409 & 0.736 & 0.119 & 0.095 & 0.193 & 0.307 & 0.544 & 0.914 & 0.744 & 0.993 \\
$s=3$ & 0.506 & 0.932 & 0.071 & 0.103 & 0.253 & 0.546 & 0.639 & 0.994 & 0.846 & 1.000 \\
$s=4$ & 0.482 & 0.959 & 0.086 & 0.107 & 0.221 & 0.582 & 0.691 & 1.000 & 0.905 & 1.000 \\ \hline
\end{tabular}
\caption{{Test for $\xbar{s} \geq_{st} X$. }Simulated power with Student's $t$ alternatives.}\label{sim.Student}
\end{table}

{Finally, we consider the mixture case discussed earlier. The results are described in Table~\ref{sim.Mix}. First note that the crossing of the CDF for $s=3$ is by a very slight margin, making it difficult to detect (see Figure \ref{plot2}). Both tests correctly identify the nondominance of $\xbar{2}$, although they need a large sample to reach a reasonable power. For this mixture distribution the Cauchy-based test delivers a larger power than the bootstrap test. }

\begin{figure}[h!]\label{fig:mix}
	\centering
	\includegraphics[scale=0.45,bb=20 35 450 425,clip=true]{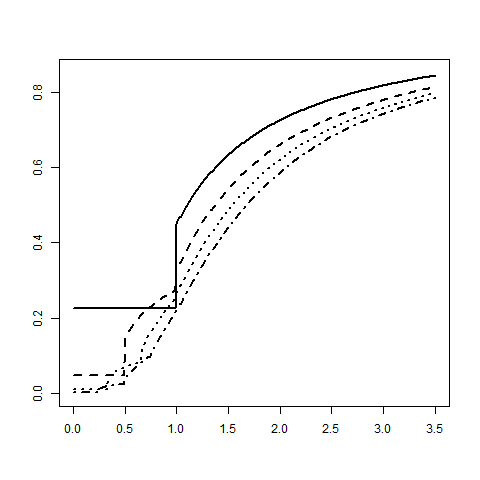}
	\caption{CDFs of $X$ (solid), $\overline{X}_2$ (dashed), $\overline{X}_3$ (dotted), and $\overline{X}_4$ (dashdotted). $X\stackrel{d}{=}0.45 Y+0.55 Z${, $Y$ Bernoulli ($\tfrac12$), $Z$ Pareto with $sh=1$}.}\label{plot2}	
\end{figure}

\begin{table}
\centering
\begin{tabular}{r|cc|cc|cc|cc}
 & \multicolumn{2}{c|}{$s=2$} & \multicolumn{2}{c|}{$s=3$} & \multicolumn{2}{c|}{$s=4$} & \multicolumn{2}{c}{$s=5$} \\ \hline
 $n$ & $100$ & $500$ & $100$ & $500$ & $100$ & $500$ & $100$ & $500$  \\ \hline
Bootstrap test & 0.058 & 0.520 & 0.004 & 0.003 & 0.003 & 0.000 & 0.001 & 0.000 \\
Cauchy-based test & 0.105 & 0.521 & 0.028 & 0.091 & 0.006 & 0.002 & 0.002 & 0.000 \\ \hline
\end{tabular}
\caption{Test for $\xbar{s} \geq_{st} X$. Simulated power with alternative $X\stackrel{d}{=}0.45 Y+0.55 Z${, $Y$ Bernoulli ($\tfrac12$), $Z$ Pareto with $sh=1$}.}\label{sim.Mix}
\end{table}

\subsection{Testing dominance between different linear combinations}
\label{sec:sim linear comb}
{In this subsection, we compare linear combinations with different weight vectors. The sample size in this case is fixed, $n=500$. We start by testing $\mathcal H_0:\theta_1X_1+\theta_2X_2\geq_{st}X$.} The results are reported in Tables~\ref{sim.Weighted.boot} and \ref{sim.Weighted.Cauchy}, using the same set of distributions as alternatives {as in Subsection~\ref{sec:sim sample mean}}.
{The results exhibit the same pattern as in the comparison of sample means}. {Moreover, power is clearly higher for weight vectors close to $(\tfrac 12,\tfrac12)$, i.e. those defining the sample mean, and lower when the weights concentrate on a single observation. This indicates that majorization influences non-dominance (CDF crossings), in line with the classical finite-mean theory linking majorization of weights to convex-order comparisons of weighted sums \citep[Chap.~3.A]{shaked}}. Expectedly, although the Cauchy-based test is much faster to run, it has a significant{ly} lower power.
\begin{table}
\centering
\begin{tabular}{r|ccccc}
Pareto ($sh$) & $sh=1$ & $sh=2$ & $sh=3$ & $sh=4$ & $sh=5$ \\ \hline
$(0.4,\,0.6)$ & 0.000 & 0.100 & 0.372 & 0.570 & 0.708 \\
$(0.2,\,0.8)$ & 0.000 & 0.119 & 0.344 & 0.484 & 0.618 \\ \hline
loglogistic ($sh$) & $sh=1$ & $sh=2$ & $sh=3$ & $sh=4$ & $sh=5$ \\ \hline
$(0.4,\,0.6)$ & 0.000 & 0.220 & 0.670 & 0.885 & 0.967 \\
$(0.2,\,0.8)$ & 0.000 & 0.197 & 0.587 & 0.780 & 0.859 \\ \hline
Fr\'{e}chet ($sh$) & $sh=1$ & $sh=2$ & $sh=3$ & $sh=4$ & $sh=5$ \\ \hline
$(0.4,\,0.6)$ & 0.000 & 0.162 & 0.514 & 0.761 & 0.869 \\
$(0.2,\,0.8)$ & 0.000 & 0.157 & 0.451 & 0.646 & 0.765 \\ \hline
Student ($df$) & $df=0.5$ & $df=1$ & $df=1.5$ & $df=3$ & $df=5$ \\ \hline
$(0.4,\,0.6)$ & 0.899 & 0.121 & 0.490 & 0.984 & 1.000 \\
$(0.2,\,0.8)$ & 0.850 & 0.119 & 0.435 & 0.940 & 0.986 \\ \hline
\end{tabular}
\caption{Simulated power of the bootstrap test for weighted combinations.}
\label{sim.Weighted.boot}
\end{table}

\begin{table}
\centering
\begin{tabular}{r|ccccc}
Pareto ($sh$) & $sh=1$ & $sh=2$ & $sh=3$ & $sh=4$ & $sh=5$ \\ \hline
$(0.4,\,0.6)$ & 0.000 & 0.020 & 0.042 & 0.136 & 0.244 \\
$(0.2,\,0.8)$ & 0.000 & 0.006 & 0.069 & 0.054 & 0.104 \\ \hline
loglogistic ($sh$) & $sh=1$ & $sh=2$ & $sh=3$ & $sh=4$ & $sh=5$ \\ \hline
$(0.4,\,0.6)$ & 0.000 & 0.056 & 0.348 & 0.594 & 0.740 \\
$(0.2,\,0.8)$ & 0.000 & 0.025 & 0.160 & 0.363 & 0.445 \\ \hline
Fr\'{e}chet ($sh$) & $sh=1$ & $sh=2$ & $sh=3$ & $sh=4$ & $sh=5$ \\ \hline
$(0.4,\,0.6)$ & 0.000 & 0.035 & 0.147 & 0.353 & 0.514 \\
$(0.2,\,0.8)$ & 0.000 & 0.022 & 0.107 & 0.161 & 0.264 \\ \hline
Student ($df$) & $df=0.5$ & $df=1$ & $df=1.5$ & $df=3$ & $df=5$ \\ \hline
$(0.4,\,0.6)$ & 0.782 & 0.104 & 0.303 & 0.917 & 0.990 \\
$(0.2,\,0.8)$ & 0.661 & 0.102 & 0.233 & 0.637 & 0.836 \\ \hline
\end{tabular}
\caption{Simulated power of the Cauchy-based test for weighted combinations.}
\label{sim.Weighted.Cauchy}
\end{table} 

{We now focus on the tests for $\mathcal H_0:\theta_1X_1+\theta_2X_2\geq_{st}\xbar{2}$. The results are reported in Tables~\ref{sim.Weighted.boot1} and \ref{sim.Weighted.Cauchy1}. Even in this case, the bootstrap test consistently outperforms the Cauchy-based test.}
\begin{table}
\centering
\begin{tabular}{r|ccccc}
Pareto ($sh$) & $sh=1$ & $sh=2$ & $sh=3$ & $sh=4$ & $sh=5$ \\ \hline
$(0.1,\,0.9)$ vs. $(0.5,\,0.5)$ & 0.000 & 0.063 & 0.159 & 0.258 & 0.336 \\
$(0.25,\,0.75)$ vs. $(0.5,\,0.5)$ & 0.037 & 0.048 & 0.084 & 0.118 & 0.138 \\ \hline
loglogistic ($sh$) & $sh=1$ & $sh=2$ & $sh=3$ & $sh=4$ & $sh=5$ \\ \hline
$(0.1,\,0.9)$ vs. $(0.5,\,0.5)$ & 0.004 & 0.137 & 0.356 & 0.530 & 0.663 \\
$(0.25,\,0.75)$ vs. $(0.5,\,0.5)$ & 0.039 & 0.103 & 0.165 & 0.243 & 0.301 \\ \hline
Fr\'{e}chet ($sh$) & $sh=1$ & $sh=2$ & $sh=3$ & $sh=4$ & $sh=5$ \\ \hline
$(0.1,\,0.9)$ vs. $(0.5,\,0.5)$ & 0.003 & 0.105 & 0.276 & 0.409 & 0.523 \\
$(0.25,\,0.75)$ vs. $(0.5,\,0.5)$ & 0.044 & 0.090 & 0.142 & 0.185 & 0.222 \\ \hline
Student ($df$) & $df=0.5$ & $df=1$ & $df=1.5$ & $df=3$ & $df=5$ \\ \hline
$(0.1,\,0.9)$ vs. $(0.5,\,0.5)$ & 0.351 & 0.115 & 0.239 & 0.724 & 0.898\\
$(0.25,\,0.75)$ vs. $(0.5,\,0.5)$ & 0.210 & 0.121 & 0.150 & 0.289 & 0.428 \\ \hline
\end{tabular}
\caption{Simulated power of the bootstrap test between different weighted combinations.}
\label{sim.Weighted.boot1}
\end{table} 

\begin{table}
\centering
\begin{tabular}{r|ccccc}
Pareto ($sh$) & $sh=1$ & $sh=2$ & $sh=3$ & $sh=4$ & $sh=5$ \\ \hline
$(0.1,\,0.9)$ vs. $(0.5,\,0.5)$ & 0.002 & 0.011 & 0.032 & 0.064 & 0.120 \\
$(0.25,\,0.75)$ vs. $(0.5,\,0.5)$ & 0.028 & 0.024 & 0.034 & 0.023 & 0.022 \\ \hline
loglogistic ($sh$) & $sh=1$ & $sh=2$ & $sh=3$ & $sh=4$ & $sh=5$ \\ \hline
$(0.1,\,0.9)$ vs. $(0.5,\,0.5)$ & 0.001 & 0.024 & 0.100 & 0.239 & 0.301 \\
$(0.25,\,0.75)$ vs. $(0.5,\,0.5)$ & 0.045 & 0.053 & 0.052 & 0.089 & 0.101 \\ \hline
Fr\'{e}chet ($sh$) & $sh=1$ & $sh=2$ & $sh=3$ & $sh=4$ & $sh=5$ \\ \hline
$(0.1,\,0.9)$ vs. $(0.5,\,0.5)$ & 0.002 & 0.015 & 0.092 & 0.129 & 0.161 \\
$(0.25,\,0.75)$ vs. $(0.5,\,0.5)$ & 0.036 & 0.033 & 0.020 & 0.046 & 0.044 \\ \hline
Student ($df$) & $df=0.5$ & $df=1$ & $df=1.5$ & $df=3$ & $df=5$ \\ \hline
$(0.1,\,0.9)$ vs. $(0.5,\,0.5)$ & 0.156 & 0.084 & 0.160 & 0.481 & 0.670 \\
$(0.25,\,0.75)$ vs. $(0.5,\,0.5)$ & 0.114 & 0.101 & 0.113 & 0.141 & 0.180 \\ \hline
\end{tabular}
\caption{Simulated power of the Cauchy-based test between different weighted combinations.}
\label{sim.Weighted.Cauchy1}
\end{table} 

{We finally apply our tests to the case when the vectors are not ordered by majorization. In this case, the theory does not provide any indication. Dominance can still hold, but we do not know in which direction, that is, we don't know which linear combination dominates the other. With this motivation, we test $\mathcal H_0:\theta_1X_1+\theta_2X_2+\theta_3X_3\geq_{st}\eta_1X_1+\eta_2X_2+\eta_3X_3$ as well the reversed null hypothesis $\mathcal H_0:\eta_1X_1+\eta_2X_2+\eta_3X_3\geq_{st}\theta_1X_1+\theta_2X_2+\theta_3X_3$. In all such cases $X$ has a loglogistic distribution. Two choices of $\overline{\eta}$ and $\overline{\eta}$ such that $\overline{\eta}\not\succ \overline{\theta}$ and $\overline{\theta}\not\succ \overline{\eta}$ are considered in Table~\ref{sim.direction1} and Table~\ref{sim.direction2}. In the first case, the two CDFs are very close and almost indistinguishable, similarly to the Cauchy case. The tests behave accordingly, showing power close to $\alpha$ in one direction and slightly larger than $\alpha$ in the other direction (where the null is false). In the latter case, one distribution dominates the other more clearly, especially for smaller values of the shape parameter. Both tests are able to detect this situation, showing power close to 0 when the null is true, and always larger than 0.7 when the null is false. We finally study a three-dimensional case where $\overline{\eta}$ and $\overline{\theta}$ are ordered in majorization. The results, reported in Table~\ref{sim.direction}, show that, in all cases considered, the power is close to 0 when the null is true, and close to 1 when it is false.}

\begin{table}
\centering
\begin{tabular}{r|ccc}
 & $sh=1$ & $sh=0.8$ & $sh=0.6$ \\ \hline
\multicolumn{4}{c}{Bootstrap test}\\ \hline
$\thetabar=(0.1,\,0.35,\,0.55)$ vs. $\etabar=(0.15,\,0.25,\,0.6)$ & 0.202 & 0.209 & 0.236 \\
$\etabar=(0.15,\,0.25,\,0.6)$ vs. $\thetabar=(0.1,\,0.35,\,0.55)$ & 0.109 & 0.104 & 0.115 \\ \hline
\multicolumn{4}{c}{Cauchy-based test}\\ \hline
$\thetabar=(0.1,\,0.35,\,0.55)$ vs. $\etabar=(0.15,\,0.25,\,0.6)$ & 0.120 & 0.126 & 0.166 \\
$\etabar=(0.15,\,0.25,\,0.6)$ vs. $\thetabar=(0.1,\,0.35,\,0.55)$ & 0.093 & 0.069 & 0.089 \\ \hline
\end{tabular}
\caption{Simulated power to detect the direction of stochastic dominance in the presence of nonmajorized weights (underlying distribution is the loglogistic with shape parameter $sh$).}
\label{sim.direction1}
\end{table}
\begin{table}
\centering
\begin{tabular}{r|ccc}
 & $sh=1$ & $sh=0.8$ & $sh=0.6$ \\ \hline
\multicolumn{4}{c}{Bootstrap test}\\ \hline
$\thetabar=(0.09,\,0.41,\,0.5)$ vs. $\etabar=(0.1,\,0.1,\,0.8)$ & 0.05 & 0.002 & 0.001 \\
$\etabar=(0.1,\,0.1,\,0.8)$ vs. $\thetabar=(0.09,\,0.41,\,0.5)$ & 0.941 & 0.941 & 0.864 \\ \hline
\multicolumn{4}{c}{Cauchy-based test}\\ \hline
$\thetabar=(0.09,\,0.41,\,0.5)$ vs. $\etabar=(0.1,\,0.1,\,0.8)$  & 0.003 & 0.006 & 0.005 \\
$\etabar=(0.1,\,0.1,\,0.8)$ vs. $\thetabar=(0.09,\,0.41,\,0.5)$ & 0.887 & 0.795 & 0.735 \\ \hline
\end{tabular}
\caption{Simulated power to detect the direction of stochastic dominance in the presence of nonmajorized weights (underlying distribution is the loglogistic with shape parameter $sh$).}
\label{sim.direction2}
\end{table}

\begin{table}
	\centering
	\begin{tabular}{r|ccc}
		& $sh=1$ & $sh=0.8$ & $sh=0.6$ \\ \hline
		\multicolumn{4}{c}{Bootstrap test}\\ \hline
{$\thetabar=(0.2,\,0.3,\,0.5)$ vs. $\etabar=(0.1,\,0.1,\,0.8)$}& 0.998 & 0.987 & 0.972 \\
{$\etabar=(0.1,\,0.1,\,0.8)$ vs. $\thetabar=(0.2,\,0.3,\,0.5)$} & 0.003 & 0.001 & 0.000 \\ \hline
		\multicolumn{4}{c}{Cauchy-based test}\\ \hline
	{$\thetabar=(0.2,\,0.3,\,0.5)$ vs. $\etabar=(0.1,\,0.1,\,0.8)$} & 0.985 & 0.956 & 0.921 \\
	{$\etabar=(0.1,\,0.1,\,0.8)$ vs. $\thetabar=(0.2,\,0.3,\,0.5)$}& 0.001 & 0.000 & 0.001 \\ \hline
	\end{tabular}
	\caption{Simulated power {with majorized weights} (underlying distribution is the loglogistic with shape parameter $sh$).}
	\label{sim.direction}
\end{table} 

\section{Empirical analysis}\label{sec:application}

We now apply the testing procedures discussed to assess whether pooling or some balancing in insurance of catastrophic events has some effect in comparing underlying risks, namely, identifying some stochastic dominance among the different insurance choices.

Let $X$ denote the severity of a single loss. We compare $X$ with the uniform aggregation $\xbar{s}$.
In an insurance-pooling interpretation, this represents the normalized loss per exposure in an idealized homogeneous pool of $s$ independent risks.
The stochastic dominance relation $\xbar{s}\ge_{st} X$ therefore means that pooling $s$ risks leads to a normalized loss which is stochastically larger than the loss of a single risk. This is the opposite of the usual diversification	intuition for finite-mean risks.

The data sets are taken from the \texttt{CASdatasets} package, a curated collection of actuarial and insurance datasets originally developed for \emph{Computational	Actuarial Science with R} (see \cite{DutangCharpentier2026CASdatasets} and
\cite{Charpentier2014CAS}). 

We consider two datasets with different risk interpretations. The first one is
the French motor third-party liability claim-severity dataset \texttt{freMTPL2sev}, with $n=26~444$ claim amounts and the corresponding policy identifiers. In our analysis, \(X\) denotes a positive claim amount, measured by the variable \texttt{ClaimAmount}. 
The second dataset that we study is the French fire claim-severity dataset
\texttt{freclaimset3fire9207}. The data contain $n=22~721$ individual fire claim payments from French insurance portfolios. In this case, \(X\) denotes the severity of a positive fire claim, measured by the variable \texttt{paid\_Y10}.
With respect to the theoretical condition implying \eqref{SD-cx}, namely the concavity of the inverted distribution \( 1-F(\tfrac1x)\), we use the empirical version \(1-\mathbb F_n(\tfrac1x)\) as a graphical diagnostic.
This is only a visual check and does not amount to a formal test of the condition. As shown in Figure~\ref{fig:conc}, the French motor third-party liability data visibly depart from a concave shape, also because of point masses, while the French fire claim-severity data appear more compatible with concavity.

\begin{figure}
\centering
\includegraphics[scale=.35]{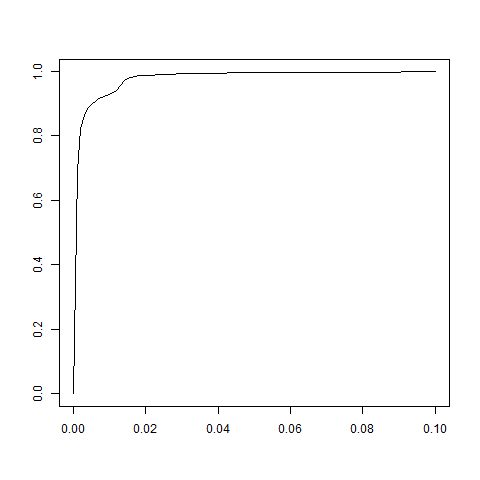}
\hspace*{.2cm}
\includegraphics[scale=.35]{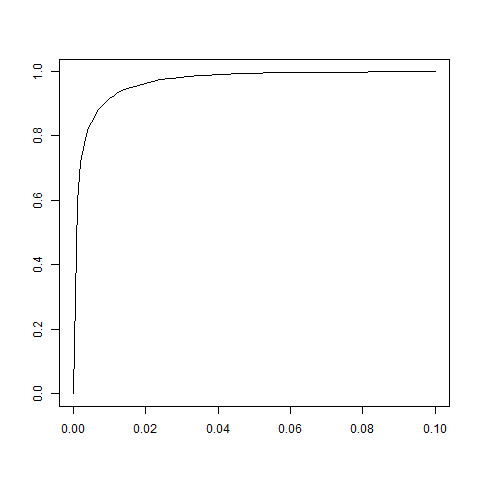}
\caption{Empirical CDF of the inverted distribution for the French motor third-party liability (left), and French fire claim-severity (right).}
\label{fig:conc}
\end{figure}

For both these claim-severity data, the pooling interpretation is direct:
\(X\) is the severity of an individual claim and \(\xbar{s}\) is the average claim
severity in an idealized homogeneous pool of \(s\) independent claims with the same severity distribution. 
We consider \(s=2,5,10,20,50,100\), allowing to assess how the dominance relation
changes as the degree of pooling increases.
%

We first focus on the motor dataset.
As noted above, the theoretical framework of \cite{Chen2025Combinations} seems not applicable in this case. 
However, a dominance relation between $\xbar{s}$ and $X$ might hold, or not, depending also on the values of $s$. In such a situation, the dominance tests we are proposing are useful for assessing the empirical evidence against the dominance and for quantifying the size of its violations. 
Indeed, \(p\)-values indicate whether these deviations are large relative to the chosen calibration.
The results show that, for small values of \(s\), the violations of the
dominance are statistically significant, with \(p\)-values essentially equal to zero for \(s=2\) and \(s=5\). 
This is also visible from the
plots of \(\mathbb F_n\) against the empirical distribution function of
\(\xbar{s}\), denoted by \(\mathbb F_{n,s}\), where the two curves exhibit clear
crossings. However, as the aggregation size increases to $s=50$, these violations become
much smaller, both graphically and statistically. Thus, in this dataset, pooling
a large number of claim severities moves the normalized pooled loss
towards the unexpected-dominance regime. In this sense, the usual diversification
intuition may fail. Nevertheless, for $s=100$ the behaviour changes again, as now the distributions exhibit a crossing in the right tail.

%
For the French fire data, the plot of \(1-\mathbb F_n(\tfrac1x)\)
(see Figure~\ref{fig:conc}) appears concave at the scale displayed in the
figure, although a closer inspection near the origin reveals local deviations
from concavity. The losses are also much larger in magnitude, pointing
to a substantially heavier-tailed severity distribution. Nevertheless, the
dominance behaviour is not immediate from this diagnostic, and the proposed
tests provide useful additional information.

The results differ substantially from those obtained for the motor data. For small aggregation sizes, the dominance null is not rejected, and the corresponding \(p\)-values are large. 
As \(s\) increases, the tests become borderline or reject the null. This behaviour is also visible in Figure~\ref{fig:fire}, where
the empirical distributions exhibit increasingly clear crossings as \(s\) grows.
Thus, for the fire data, the departures from dominance are small for small pools and large for larger pools. This is
consistent with a finite-mean behaviour: as the degree of pooling increases, the distribution of \(\xbar{s}\) starts to
concentrate around the mean, leading to crossings. In this sense, the fire data appear closer to the
classic diversification notion than the motor data, although the
heavy-tailed nature of the losses makes this behaviour visible only for large aggregation sizes.

To conclude, in both datasets, for some aggregation sizes, the normalized pooled loss
\(\bar X_s\) appears to become stochastically larger than the individual loss
\(X\), as indicated by our tests. In this sense, the proposed procedures are
useful for detecting how the effect of pooling changes with the aggregation
level.
The empirical analysis also shows that the tests are particularly informative for
real datasets. In simulation studies, the distributions usually have a regular shape, and the effect of \(s\) is often governed by a small number of
shape parameters. By contrast, the insurance datasets considered here exhibit
more irregular empirical shapes, making it harder to predict the effect
of aggregation.

\begin{figure}
	\centering
	\begin{tabular}{ccc}
		\includegraphics[scale=1.1]{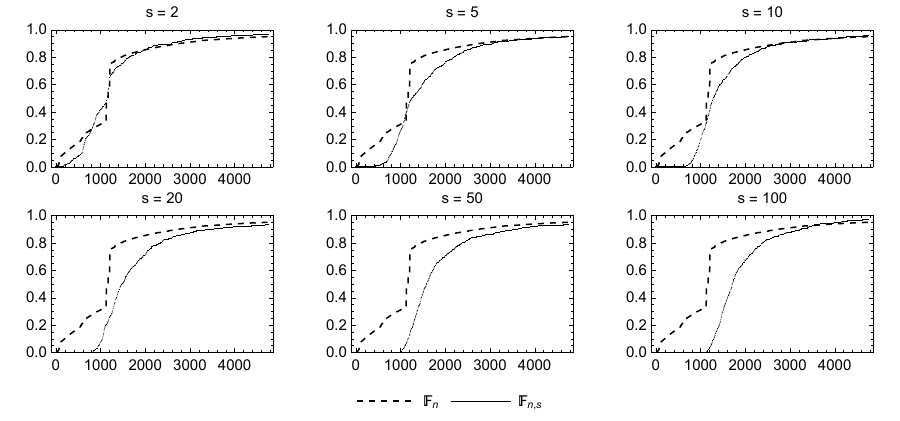}
	\end{tabular}
	\caption{French motor claim-severity data. Comparison between $\F_n$ and the CDF of $\bar X_s$, denoted with $\F_{n,s},$ for different values of $s$.}\label{fig:motor}
\end{figure}


\begin{table}[ht]
	\centering
	
	\begin{tabular}{rrrrrr}
		\hline
		\(s\) 
		& \(\sqrt n\,T_{(s)}\) 
		& \(\sqrt n\,c^{\mathcal C}_{0.90}\) 
		& $p({\mathcal C})$
		& \(\sqrt n\,c^{Boot}_{0.90}\) 
		& $p({Boot})$ \\
		\hline
		2   & 17.073 & 1.136 & 0.000 & 1.321 & 0.000 \\
		5   & 14.082 & 1.528 & 0.000 & 1.946 & 0.000 \\
		10  & 0.521  & 2.101 & 0.842 & 2.216 & 0.932 \\
		20  & 0.789  & 2.868 & 0.744 & 2.743 & 0.804 \\
		50  & 1.941  & 4.399 & 0.520 & 4.528 & 0.495 \\
		100 & 3.128  & 5.285 & 0.400 & 6.091 & 0.421 \\
		200 & 3.412  & 8.704 & 0.520 & 8.485 & 0.441 \\
		\hline
	\end{tabular}
	\caption{Cauchy-based and bootstrap tests for the French motor claim-severity data.}
	\label{tab:french-motor-cauchy-bootstrap-scaled}
\end{table}

\begin{figure}
	\centering
	\begin{tabular}{ccc}
		\includegraphics[scale=1.1]{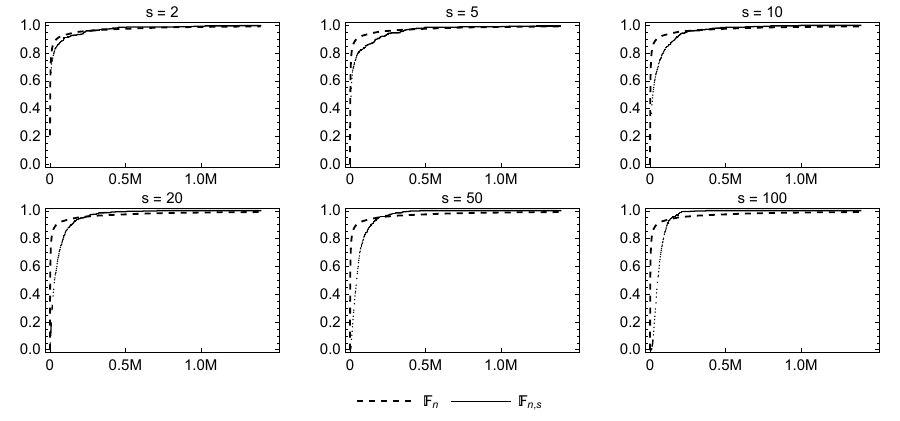}
	\end{tabular}
	\caption{French fire claim-severity data. Comparison between $\F_n$ and the CDF of $\bar X_s$, denoted with $\F_{n,s},$ for different values of $s$.}\label{fig:fire}
\end{figure}

\begin{table}[ht]
	\centering
	\begin{tabular}{rrrrrr}
		\hline
		\(s\) 
		& \(\sqrt n\,T_{(s)}\) 
		& \(\sqrt n\,c^{\mathcal C}_{0.90}\) 
		& \(p(\mathcal C)\) 
		& \(\sqrt n\,c^{Boot}_{0.90}\) 
		& \(p(Boot)\) \\
		\hline
		2   & 0.354 & 1.087 & 0.880 & 1.446 & 0.998 \\
		5   & 0.824 & 1.461 & 0.494 & 1.332 & 0.373 \\
		10  & 1.470 & 2.162 & 0.296 & 2.308 & 0.341 \\
		20  & 2.556 & 2.916 & 0.156 & 2.666 & 0.124 \\
		50  & 4.389 & 4.435 & 0.106 & 3.594 & 0.050 \\
		100 & 5.701 & 5.829 & 0.104 & 5.579 & 0.100 \\
		\hline
	\end{tabular}
		\caption{Cauchy-based and bootstrap tests for the French fire claim-severity data.}
	\label{tab:french-fire-cauchy-bootstrap-scaled}
\end{table}

\begin{appendix}

\section{Proofs}
\label{ap:WeigConvH}

\begin{proof}[Proof of Proposition~\ref{t1}]
It is well known that $U\geq_{st}V$ implies $\sum_k \pi_k U_k\geq_{st}\sum_k \pi_k V_k$, by closure of stochastic dominance under convolution and nonnegative scaling (see for instance Theorem 1.A.3 of \cite{shaked}). 
Taking $U=\sum_i\theta_iX_i$ and $V=\sum_j\eta_jX_j$, the first assertion follows immediately since 
$$
\sum_{\ell=1}^{k} \pi_\ell U_\ell=\sum_i^{nk} (\overline{\theta}\otimes \overline{\pi})_i X_i \quad \text{and} \quad \sum_{\ell=1}^{k} \pi_\ell V_\ell=\sum_j^{mk} (\overline{\eta}\otimes \overline{\pi})_jX_j, 
$$
where {$U_\ell$ and $V_\ell$} are \iid\ copies of $U$ and $V$, respectively.

The second assertion is an immediate consequence of assertion 1.

For the third assertion, {it is enough to prove the result for} a single $s$-split. Let $\etabar=(\eta_1,\ldots,\eta_k)$ and fix $j\in\{1,\ldots,k\}$ so that
$$
\thetabar=(\eta_1,\ldots,\eta_{j-1},\underbrace{\tfrac{\eta_j}{s},\ldots,\tfrac{\eta_j}{s}}_{s\ \text{times}},\eta_{j+1},\ldots,\eta_k).
$$
Write
\[
\sum_{i=1}^{k}\eta_i X_i = \eta_j X_j + \sum_{i\ne j}\eta_i X_i =: \eta_j X_j + Z,
\]
where $Z$ is independent of $(X_j,X_{k+1},\dots,X_{k+s-1})$. After splitting $\eta_j$ into $s$ equal parts we obtain
$$
\sum_{i=1}^{k+s}\theta_i X_i = \frac{\eta_j}{s}\left(X_j+X_{k+1}+\cdots+X_{k+s-1}\right) + Z
= \eta_j\overline{X}_s^{(j)} + Z,
$$
where $\overline{X}_s^{(j)}:=\frac1s(X_j+X_{k+1}+\cdots+X_{k+s-1})$ has the same distribution as $\xbar{s}$.
Multiplying by the nonnegative constant $\eta_j$ preserves the dominance $\overline{X}_s^{(j)}\geq_{st} X_j$, hence $\eta_j\,\overline{X}_s^{(j)}\geq_{st} \eta_j X_j$.
Finally, adding $Z$ to both sides preserves the stochastic order, so
$$
\sum_{i=1}^{k+s}\theta_i X_i = \eta_j\overline{X}_s^{(j)} + Z\geq_{st}\eta_j X_j + Z = \sum_{j=1}^k\eta_j X_j,
$$
completing the proof of this assertion.

Concerning assertion 4., write $\xbar{s+k}$ as the weighted average:
$$
\xbar{s+k}
=\frac{s}{s+k}\left(\frac1s\sum_{i=1}^s X_i\right)
+\frac{k}{s+k}\left(\frac1k\sum_{i=s+1}^{s+k}X_i\right)
=\frac{s}{s+k}\,S_1+\frac{k}{s+k}\,S_2,
$$
where $S_1\stackrel{d}{=}\xbar{s}$, $S_2\stackrel{d}{=}\xbar{k}$ and are independent. Replacing $\xbar{s}$ with $\xbar{t}$, which is stochastically smaller by assumption, gives the result.
\end{proof}

\begin{proof}[Proof of Proposition~\ref{prop:Hadamard}]
The weighted convolution operator is obviously linear with respect to each CDF argument (see (\ref{eq:wei-conv-op})).
Therefore, given $t\in\mathbb{R}$ and $h_t,\,h\in \ell^\infty(\mathbb{R})$, such that $\lim_{t\rightarrow0}\|h_t-h\|_\infty=0$, using this bilinearity,
$$
L_{\theta_1,\theta_2,F+th_t}(F+th_t)-L_{\theta_1,\theta_2,F}(F)=t\left(L_{\theta_1,\theta_2,F}(h_t)+L_{\theta_1,\theta_2,h_t}(F)\right)+t^2L_{\theta_1,\theta_2,h_t}(h_t),
$$
it follows that, when $\thetabar=(\theta_1,\theta_2)$, the Hadarmard derivative at $F$ in the direction $h$ of the operator is $L_{(\theta_1,\theta_2)}'(F;h)=L_{\theta_1,\theta_2,F}(h)+L_{\theta_1,\theta_2,h}(F)$. For a general vector $\thetabar=(\theta_1,\ldots,\theta_s)$ note first that $L_{(\theta_1,\ldots,\theta_s)}(F)=L_{1,\theta_s,F}(L_{(\theta_1,\ldots,\theta_{s-1})}(F))$.
Applying the chain rule to this representation, one gets
$$
L_{(\theta_1,\ldots,\theta_s)}'(F;h)= L_{1,\theta_s,F}(L_{(\theta_1,\ldots,\theta_{s-1})}'(F;h)) + L_{1,\theta_s,h}(L_{(\theta_1,\ldots,\theta_{s-1})}(F)).
$$
Iterating this expression concludes the proof of the representation (\ref{eq:hadamard}).
\end{proof}

\label{ap:uConv}

\begin{proof}[Proof of Proposition~\ref{prop:unif-conv}]
	It is enough to prove the claim for $\bar\theta=(\theta_1,\theta_2)$, since the general case follows by iterating the same argument. 
	Write
	\begin{equation}
		\F_{n,\bar\theta}(x)-F_{\bar\theta}(x)=
		\int \F_n\!\left(\frac{x-\theta_2 t}{\theta_1}\right)\,d\F_n(t)
		-
		\int F\!\left(\frac{x-\theta_2 t}{\theta_1}\right)\,dF(t)=A_n(x)+B_n(x),
		\label{eq:decomp_unif}
	\end{equation}
	where 
		\begin{eqnarray*}
	A_n(x) & :=	& \int \F_n\!\left(\frac{x-\theta_2 t}{\theta_1}\right)-F\!\left(\frac{x-\theta_2 t}{\theta_1}\right)\,dF(t),\\[1.5ex]
	B_n(x) & := & \int \F_n\!\left(\frac{x-\theta_2 t}{\theta_1}\right)\,d(\F_n-F)(t).
	\end{eqnarray*}
	
	We prove that $\sup_{x\in\R}|A_n(x)|\to_{a.s.} 0$ and $\sup_{x\in\R}|B_n(x)|\to_{a.s.} 0$.
	By the triangle inequality and the Glivenko-Cantelli theorem,
	\begin{align*}
		\sup_{x\in\R}|A_n(x)|
		&\leq
		\sup_{x\in\R}\int 
		\left|\F_n\!\left(\frac{x-\theta_2 t}{\theta_1}\right)-F\!\left(\frac{x-\theta_2 t}{\theta_1}\right)\right|\,dF(t)\\[1.5ex]
		&\leq
		\int \sup_{x\in\R}
		\left|\F_n\!\left(\frac{x-\theta_2 t}{\theta_1}\right)-F\!\left(\frac{x-\theta_2 t}{\theta_1}\right)\right|\,dF(t)\\[1.5ex]
		&=
		\sup_{y\in\R}\left|\F_n(y)-F(y)\right|\to_{a.s.}0.
	\end{align*}
Denote $F_k(x)=F(\tfrac xk)$. Then write
	$$
	\int \F_n\!\left(\frac{x-\theta_2 t}{\theta_1}\right)\,dF(t)
	=
	\int \F_n\!\left(\frac{x}{\theta_1}-u\right)\,dF_{\tfrac{\theta_2}{\theta_1}}(u)
	=
	\mathbb{F}_n\ast F_{\tfrac{\theta_2}{\theta_1}}\!\left(\frac{x}{\theta_1}\right).
	$$
	Proceeding similarly for the remaining term in $B_n(x)$ and using commutativity of convolution, we obtain
	$$
	B_n(x)
	=
	\int \mathbb{F}_{\tfrac{\theta_2}{\theta_1},n}\!\left(\frac{x}{\theta_1}-u\right)-F_{\tfrac{\theta_2}{\theta_1}}\!\left(\frac{x}{\theta_1}-u\right)\,d\mathbb{F}_n(u),
	$$
	where $\F_{\tfrac{\theta_2}{\theta_1},n}$ denotes the empirical CDF of the sample $\tfrac{\theta_2}{\theta_1}X_1, ...,\tfrac{\theta_2}{\theta_1}X_n$. Since $\F_n$ is a CDF, 
	$$
	|B_n(x)|
	\le
	\sup_{u\in\mathbb R}\left|\mathbb{F}_{\tfrac{\theta_2}{\theta_1},n}\!\left(\frac{x}{\theta_1}-u\right)-F_{\tfrac{\theta_2}{\theta_1}}\!\left(\frac{x}{\theta_1}-u\right)\right|
	=
	\sup_{y\in\mathbb R}\left|\mathbb{F}_{\tfrac{\theta_2}{\theta_1},n}(y)-F_{\tfrac{\theta_2}{\theta_1}}(y)\right|.
	$$
	Taking the supremum over $\R$ yields
	$$
	\sup_{x\in\mathbb R}|B_n(x)|
	\le
	\sup_{y\in\mathbb R}\left|\mathbb{F}_{\tfrac{\theta_2}{\theta_1},n}(y)-F_{\tfrac{\theta_2}{\theta_1}}(y)\right|
	\to_{a.s.}0,
	$$
	again by the Glivenko-Cantelli theorem.
	
	\smallskip
	We conclude that 
    $\sup_{x\in\mathbb R}\left|\mathbb{F}_{n,\bar\theta}(x)-\mathbb{F}_{\bar\theta}(x)\right|
	\to_{a.s.}0,$ which proves Proposition~\ref{prop:unif-conv} for $\bar\theta=(\theta_1,\theta_2)$.
\end{proof}

\label{ap:emp-proc}
\begin{proof}[Proof of Theorem~\ref{thm:weak-Fthetabar}]
{The continuity of $F$ implies that $\sqrt n\,(\F_n-F)\rightsquigarrow\mathbb{B}_F$ in $\ell^\infty(\R)$. Then, t}he convergence of the empirical process is an immediate consequence of the functional delta method as
$\sqrt{n}\,\left(\mathbb{F}_{n,\thetabar}-F_{\thetabar}\right)=\sqrt{n}\,\left(L_{\thetabar}(\mathbb{F}_n)-L_{\thetabar}(F)\right)
\rightsquigarrow L_{\thetabar}'(F;\mathbb{B}_F)=:\mathbb{H}_{\thetabar}^F$.

To prove the representation of $\mathbb{H}_{\thetabar}^F$, first, note that, using (\ref{eq:hadamard}), we know that 
\begin{equation}
\label{eq:covaT}
L_{\thetabar}'(F;\BB)=\sum_{(a_1,\ldots,a_s)\in D} L_{1,\theta_s,a_s}\circ L_{1,\theta_{s-1},a_{s-1}}\circ\cdots\circ L_{1,\theta_3,a_3}\circ L_{\theta_1,\theta_2,a_1},
\end{equation}
where $D=\{(\BB,F,\ldots,F),\,(F,\BB,F,\ldots,F),\ldots,(F,\ldots,F,\BB)\}$, as follows from Proposition~\ref{prop:Hadamard}. 
Consider some fixed $(a_1,\ldots,a_s)\in D$, that is, the set of vectors that have $\BB$ in its $j$-th coordinate and $F$ in the others. Then,
\begin{eqnarray*}
\lefteqn{L_{1,\theta_s,a_s}\circ L_{1,\theta_{s-1},a_{s-1}}\circ\cdots\circ L_{1,\theta_3,a_3}\circ L_{\theta_1,\theta_2,a_1}(x)} \\[1.25ex]
 & = & \int\!\!\cdots\!\!\int \mathbb{I}_{\{\sum_{i=1}^s \theta_i u_i\leq x \}}\;da_1(u_1)\cdots da_s(u_s)\\[1.25ex]
 & = & \int\!\!\cdots\!\!\int \mathbb{I}_{\{\sum_{i=1}^s \theta_i u_i\leq x \}}\;dF(u_1)\cdots d\BB(u_j)\cdots dF(u_s).
\end{eqnarray*}
Recalling that, according to (\ref{eq:sum-n}), $F_{\thetabar_{-j}}$ is  the CDF of $\sum_{i\neq j}\theta_iX_i$, integrating with respect to all variables except the $j$-th one, and bearing in mind that the condition of the indicator is equivalent to $\sum_{i\neq j} \theta_i u_i\leq x-\theta_ju_j,$ we can write
$$
\int\!\!\cdots\!\!\int \mathbb{I}_{\{\sum_{i=1}^s \theta_i u_i\leq x \}} \,\prod_{i\neq j}dF(u_i)=F_{\thetabar_{-j}}(x-\theta_ju_j).
$$
Summing over the set $C$ we obtain
$$
L_{\thetabar}'(F;\BB)=\sum_{j=1}^s \int F_{\thetabar_{-j}}(x-\theta_ju)\,d\BB(u).
$$
Now, note that
$$
F_{\thetabar_{-j}}(x-\theta_ju)=\int \mathbb{I}_{\{t\leq  x-\theta_ju\}}\,dF_{\thetabar_{-j}}(t)=\int \mathbb{I}_{\{u\leq  \tfrac{x-t}{\theta_j}\}}\,dF_{\thetabar_{-j}}(t),
$$
so that, by Fubini's Theorem,
\begin{eqnarray*}
\sum_{j=1}^s \int F_{\thetabar_{-j}}(x-\theta_ju)\,d\BB(u) & = &
   \sum_{j=1}^s \int \left(\int \mathbb{I}_{\{u\leq  \tfrac{x-t}{\theta_j}\}}\,dF_{\thetabar_{-j}}(t)\right)\,d\BB(u)\\
 & = & \sum_{j=1}^s  \int \left(\int \mathbb{I}_{\{u\leq  \tfrac{x-t}{\theta_j}\}}\,d\BB(u)\right)\,dF_{\thetabar_{-j}}(t)\\
 & = & \sum_{j=1}^s  \int \BB\left(\tfrac{x-t}{\theta_j}\right) \,dF_{\thetabar_{-j}}(t).
\end{eqnarray*}
The weak limit $\mathbb{H}_{\thetabar}^F$ is a centred Gaussian process, since it is obtained by linear transformations of the Gaussian process $\BB.$

The bilinearity of the covariance implies that 
\begin{eqnarray*}
\operatorname{Cov}\left(\mathbb{H}_{\thetabar}^F(x),\mathbb{H}_{\thetabar}^F(y)\right) & = & \sum_{j,k}\Omega_{j,k}(x,y) \\
 & = & \sum_{j,k}\iint\operatorname{Cov}\left(\BB\left(\tfrac{x-t}{\theta_j}\right),\,\BB\left(\tfrac{x-s}{\theta_k}\right)\right)\,dF_{\thetabar_{-j}}(t)dF_{\thetabar_{-k}}(s), \\
 & = & \sum_{j,k}\iint F\left(\tfrac{x-t}{\theta_j}\wedge\tfrac{x-s}{\theta_k}\right) - F\left(\tfrac{x-t}{\theta_j}\right)F\left(\tfrac{x-s}{\theta_k}\right)\,dF_{\thetabar_{-j}}(t)dF_{\thetabar_{-k}}(s).
\end{eqnarray*}
Using now the representation $F(z)=\int\mathbb{I}_{\{u\leq z\}}\,dF(u)$ for the integrands, and Fubini's Theorem, the conclusion follows.
\end{proof}

\label{ap:test-Cauchy}
\begin{proof}[Proof of Theorem~\ref{thm:Htilde}]
For a general CDF $F$, the process $\mathbb{H}_{\thetabar}^F-\mathbb{H}_{\etabar}^F$ is Gaussian, so it has trajectories that are almost surely continuous. Therefore, with probability 1 we have 
$$
\sup_{\{x:\,F_{\thetabar}(x)<F_{\etabar}(x)\}} \left(\mathbb{H}_{\thetabar}^F(x)-\mathbb{H}_{\etabar}^F(x)\right)=\sup_{\{x:\,F_{\thetabar}(x)<F_{\etabar}(x) \text{ a.e.}\}} \left(\mathbb{H}_{\thetabar}^F(x)-\mathbb{H}_{\etabar}^F(x)\right).
$$
So, without loss of generality, we may assume that $\widetilde{\mathcal H}_0=\{F:\,F_{\thetabar}(x)<F_{\etabar}(x),\,x\in\mathbb{R}\}$. As in the proof of Proposition~\ref{prop:Hadamard}, the operator $L_{\thetabar,\etabar}=(L_{\thetabar},L_{\etabar}):\ell^\infty(\R)\times \ell^\infty(\R)\to \ell^\infty(\R)\times \ell^\infty(\R)$ is Hadamard differentiable at the point $(F,F)$ on the direction of $(h_1,h_2)\in \ell^\infty(\R)\times \ell^\infty(\R)$, with derivative given by $L_{\thetabar,\etabar}((F,F);(h_1,h_2))=(L_{\thetabar}'(F;h_1),L_{\etabar}'(F;h_2))$. Then, the functional delta method gives 
$$
\sqrt n \left(\mathbb{F}_{n,\thetabar}-F_{\thetabar},\mathbb{F}_{n,\etabar}-F_{\etabar}\right)
\rightsquigarrow
(\mathbb{H}_{\thetabar}^F,\mathbb{H}_{\etabar}^F)
$$
in $\ell^\infty(\R)\times \ell^\infty(\R)$. Since the difference operator is linear and continuous, it follows that 
$$
\sqrt n\, \left((\mathbb{F}_{n,\thetabar} -\mathbb{F}_{n,\etabar})-(F_{\thetabar}-F_{\etabar})\right)\rightsquigarrow {\mathbb{H}_{\thetabar}^F-\mathbb{H}_{\etabar}^F}.
$$
Define the map $\mathcal M:\ell^\infty(\R)\to \ell^\infty(\R)$ by $\mathcal{M}(h)=h_+=\max(0,h)$. Clearly, taking into account that we are referring to CDFs, $\sup (\mathbb{F}_{n,\thetabar}-\mathbb{F}_{n,\etabar})=\sup \mathcal M(\mathbb{F}_{n,\thetabar}-\mathbb{F}_{n,\etabar})$. 
Now, when $F\in\mathcal{H}_0$, $\mathcal{M}(F_{\thetabar}-F_{\etabar})=0$, so we have that $\mathcal{M}(\mathbb{F}_{n,\thetabar}-\mathbb{F}_{n,\etabar})=\mathcal{M}(\mathbb{F}_{n,\thetabar}-\mathbb{F}_{n,\etabar})-\mathcal{M}(F_{\thetabar}-F_{\etabar})$. Following \citet[p. 384]{fang}, the map $\mathcal M$ is Hadamard directionally differentiable at $\delta=F_{\thetabar}-F_{\etabar}$, with derivative
$$
{\mathcal{M}'(\delta;h)(x)}=\left\{
\begin{array}{lcl}
h(x), & \quad & \text{ if $\delta(x)>0$ a.e.,}\\ [1ex]
\max(0,h(x)), & & \text{ if $\delta(x)=0$ a.e.,}\\[1ex]
0, & &\text{ if $\delta(x)<0$ a.e.}
\end{array}\right.
$$
Therefore, we may apply the functional delta method to the mapping $\mathcal{M}$, to obtain 
$$
\sqrt{n}\,\left(\mathcal{M}(\mathbb{F}_{n,\thetabar}-\mathbb{F}_{n,\etabar})-\mathcal{M}(F_{\thetabar}-F_{\etabar})\right)\rightsquigarrow 
{\mathcal{M}'(\delta;\mathbb{H}_{\thetabar}^F-\mathbb{H}_{\etabar}^F)},
$$ 
in $\ell^\infty(\mathbb{R})$.
Finally, $F\in\widetilde{\mathcal{H}}_0$ if and only if $\delta<0$, so $
{\mathcal{M}'(\delta;\mathbb{H}_{\thetabar}^F-\mathbb{H}_{\etabar}^F)}=0$. Accordingly, the continuous mapping theorem implies that $\sqrt{n}\,\sup\mathcal{M}(\mathbb{F}_{n,\thetabar}-\mathbb{F}_{n,\etabar})\to_p 0$. Differently, for $F=\mathcal C$, $\delta(x)=0$ for every $x\in\R$, hence, the Hadamard directional derivative coincides with $\mathcal{M}$, and so, we have that 
$\sqrt{n}\,\sup(\mathbb{F}_{n,\thetabar}-\mathbb{F}_{n,\etabar})\to_d \sup ({\mathbb{H}_{\thetabar}^{\mathcal{C}}-\mathbb{H}_{\etabar}^{\mathcal{C}}})$, which is a nonnegative random variable. 
\end{proof}

\begin{proof}[Proof of Proposition~\ref{prop:test-conv-Cauchy}]
{The first assertion is obvious.}
Note that, as $\mathbb{H}_{\thetabar}^{\mathcal{C}}-\mathbb{H}_{\etabar}^{\mathcal{C}}$ is a centred Gaussian process, $\sqrt{n}\,T_{\thetabar,\etabar}(\mathbb C_n)$ converges in distribution to a nonnegative random variable with strictly positive and finite quantile, namely, $c_{\alpha,n}\to c_\alpha<\infty$. If $F\in\widetilde{\mathcal{H}}_0$, $\sqrt{n}\,T_{\thetabar,\etabar}(\F_n)\to_p 0$, therefore the {second} assertion follows immediately. On the other hand, if $F\in\mathcal{H}_1$, ${T_{\thetabar,\etabar}}(\F_n)\to_{p}t_0>0$, therefore, $\sqrt{n}\,{T_{\thetabar,\etabar}}(\F_n)\to_{p}+\infty$, proving the {third} assertion.
\end{proof}

\begin{proof}[Proof of Proposition~\ref{prop:test-conv-boot}]
Assume $F\in\mathcal{H}_0$, then, as stated in (\ref{eq:upper1}), {$\sqrt{n}\,T_{\thetabar,\etabar}(\F_n)$ is bounded above by a sequence of random variables which converges in distribution to $\left\|\mathbb H_{\thetabar}^F-\mathbb H_{\etabar}^F\right\|_\infty$.}
{Since $\mathbb H_{\thetabar}^F-\mathbb H_{\etabar}^F$} is a centred Gaussian process, {then the} $(1-\alpha)$-quantile of {$\left\|\mathbb H_{\thetabar}^F-\mathbb H_{\etabar}^F\right\|_\infty$}, denoted as $c_\alpha$, is finite, positive and unique. 
{Therefore, under} $\mathcal{H}_1$, $T_{\thetabar,\etabar}(\F_n,)\to_{p}t_0>0$, {so}, $\sqrt{n}\,T_{\thetabar,\etabar}(\F_n)\to_{p}\infty$. 
The bootstrap process $\sqrt{n}\,(\F_n^b-\F_n)$ converges weakly almost surely to $\BB$, conditionally on the data. Then, the functional delta method for the bootstrap implies that, for each weight vector $\thetabar$, {$\sqrt{n}\,\left(T_{\thetabar}(\F_n^b)-T_{\thetabar}(\F_n)\right)$} converges weakly in probability to {$\mathbb{H}^F_{\thetabar}$}, conditionally on the data (see Theorem 2.9 in \cite{kosorok}). 
Therefore, we can conclude that $\mathcal{G}^b_n$ converges weakly in probability to {$\mathbb{H}_{\thetabar}^F-\mathbb{H}_{\etabar}^F$}, conditionally on the data. Finally, by the continuous mapping theorem, $\left\| \mathcal{G}^b_n\right\|_\infty\to_d {\left\|{\mathbb{H}_{\thetabar}^F-\mathbb{H}_{\etabar}^F}\right\|_\infty}$ conditionally on the data. 
The test rejects $\mathcal{H}_0$ if $\sqrt{n}\,T_{\thetabar,\etabar}(F_n)>c_{\alpha,n}$ given by (\ref{eq:critical}), and we now have that $c_{\alpha,n}\to_pc_\alpha$.
Therefore, under the null hypothesis $\mathcal{H}_0$, $\lim_{n\to\infty}P\left({T_{\thetabar,\etabar}}(\F_n)>c_{\alpha,n}\right)\leq \alpha$. Differently, if $F\in\mathcal{H}_1$, $\lim_{n\to\infty}P\left(T_{\thetabar,\etabar}(\F_n)>c_{\alpha,n}\right)=1$ because $c_\alpha$ is finite.
\end{proof} 
\end{appendix}


\bibliographystyle{abbrvnat} 
\bibliography{biblio}  
\section*{Funding}
P.E.O. acknowledges financial support by the Centre for Mathematics of the University of Coimbra (CMUC, https://doi.org/10.54499/UID/00324/2025) under the Portuguese Foundation for Science and Technology (FCT), Grants UID/00324/2025 and UID/PRR/00324/2025.
\section*{Competing interests}
The authors declare none.
\section*{Acknowledgements}The authors used ChatGPT for language editing of selected parts of the manuscript
and to assist with code checking for the numerical analyses.
\section*{Corresponding author contact}
Tommaso Lando, Department of Economics, University of Bergamo, via dei Caniana 2, 24127 Bergamo, Italy. Email: tommaso.lando@unibg.it.
\end{document}